\def\BibTeX{{\rm B\kern-.05em{\sc i\kern-.025em b}\kern-.08em
    T\kern-.1667em\lower.7ex\hbox{E}\kern-.125emX}}
\newtheorem{assumption}{Assumption}
\newtheorem{lemma}{Lemma}
\newtheorem{theorem}{Theorem}
\newtheorem{remark}{Remark}
\newtheorem{definition}{Definition}
\def\dref#1{(\ref{#1})}
\begin{document}
\title{An Operator-Theoretic Approach to Robust Event-Triggered Control of Network Systems with Frequency-Domain Uncertainties}
\author{Shiqi Zhang, Yuezu Lv, \IEEEmembership{Member, IEEE}, and Zhongkui Li,  \IEEEmembership{Member, IEEE}
\thanks{This work was supported by the National Natural Science Foundation of China under grants 61973006 and U1713223. }
\thanks{S. Zhang and Z. Li are with the State Key Laboratory for Turbulence and Complex Systems, Department of Mechanics and Engineering Science, College of Engineering, Peking University, Beijing 100871, China (e-mail: zsqpkuedu@pku.edu.cn;zhongkli@pku.edu.cn). }
\thanks{Y. Lv is with the Department of Systems Science, School of Mathematics, Southeast University, Nanjing 211189, China (email: yzlv@seu.edu.cn)   }
\thanks{ }}

\maketitle

\begin{abstract}
In this paper, we study the robustness of the event-triggered consensus algorithms against frequency-domain uncertainties. It is revealed that the sampling errors resulted by event triggering are essentially  images of linear finite-gain $\mathcal L_2$-stable operators acting on the consensus errors of the sampled states and 	 the event-triggered  mechanism is  equivalent to a negative feedback loop introduced additionally to the feedback system. In virtue of this,  the robust consensus problem of the event-triggered network systems subject to additive dynamic uncertainties and network multiplicative uncertainties are considered, respectively. 
In both cases, quantitative relationships among the parameters of the controllers, the Laplacian matrix of the network topology, and the robustness against aperiodic event triggering and frequency-domain uncertainties are unveiled. 
 Furthermore, the event-triggered dynamic average consensus (DAC) problem is also investigated, wherein the sampling errors are shown to be images of nonlinear finite-gain operators. The robust performance of the proposed DAC algorithm is analyzed, which indicates that the robustness and the performance are negatively related to the eigenratio of the Laplacian matrix. Simulation examples are also provided to verify the obtained results.
\end{abstract}

\begin{IEEEkeywords}
 Event-triggered control, operator theory, frequency-domain uncertainties, robust control,  distributed  control. 
\end{IEEEkeywords}

\section{Introduction}

\label{sec:introduction}
\IEEEPARstart{E}{v}ent-triggered control,  whose basic idea  is to replace the continuous or periodic sampling mechanism by the  aperiodic  and sporadic one in the control algorithms \cite{Arzen1999eventbased}, \cite{Astrom2002comparison},
\cite{muller2009}, \cite{Tabuada2007eventtrigger},  \cite{heemels2012introduction},
\cite{Miskowicz2015eventbased},
 originates form the aperiodic sampling problem \cite{hetel2017survey} and shows great effectivity  when applied in controlling continuous-time systems with digital controllers.   
In the last decade or so, it has been further introduced to distributed control  of network systems \cite{Lemmon2008eventtriggered}, \cite{Lemmon2011eventtriggering}, \cite{Dimos2012distributed}, \cite{Tan2019consensus}, \cite{Cao2021distributed}, \cite{Wen2016}, \cite{Xu2019}, wherein not only the sampling mechanism but also the communicating mechanism among agents is event-based.
 
 Compared to the continuous or time-driven distributed control algorithms, event-triggered ones release information-exchange burden and thus have lower communication cost. Moreover, for some environments in which  continuous communication is restricted, forbidden or impossible, the event-triggered algorithms are more practical  \cite{Cortes2019survey}. Typical works on distributed event-triggered control include \cite{cortes2016distributed},\cite{dimarogonas2012distributed}, \cite{Garcia2013decentralized}, \cite{berneburg2019ditributed} for  integrator networks and \cite{Dimer2012a}, \cite{Dimer2012b},
\cite{Garcia2014decentralized}
 for general linear systems. Reference \cite{kia2015distributed} introduced the event-triggered mechanism to dynamic average consensus (DAC) algorithms. \cite{Tan2019consensus}, \cite{Wen2016}, \cite{Xu2019}, \cite{Garcia2018}, \cite{Dimos2015event-triggered},\cite{cheng2016leaderfollowing} considered the event-triggered leader-following tracking problems and  \cite{Cao2021distributed} further considered distributed dynamic event-triggered control for nonlinear multi-agent systems.   
%For a comprehensive review of the event-triggered consensus control of multi-agent systems, we refer readers to \cite{Cortes2019survey}.  

In the aforementioned works,  the effectivity of the event-triggered control algorithms relies on the assumption that the  system dynamics  are accurately known. This assumption, however, is too stringent in reality due to the  ubiquitous unmodeled dynamics, the omnipresent communication constraints and the universal parametric uncertainties. Therefore, it is  quite an imperative task to examine the robustness of  the event-triggered control algorithms in the presence of  uncertainties. To the knowledge of the authors, there are few works along this line, except \cite{Paul2017stabilization,Seuret2019robust}.        
 Reference \cite{Paul2017stabilization} considered the robust event-triggered stabilization problem for  discrete-time systems and \cite{Seuret2019robust} studied the continuous-time case. These works are fairly important in the sense that they provide  conditions under which the  event-triggered control algorithms can still work in the presence of time-domain uncertainties. 

Apart from time-domain parametric uncertainties, frequency-domain uncertainties, including  unmodeled dynamics, and modeling errors,  are  a more general class of uncertainties \cite{zhou1998essentials}, which may be encountered and need to be dealt with in the event-triggered control problem. In  network systems, communication delays, package dropping and  network uncertainties  are also very prevalent phenomena and thus put forward new challenges to the distributed event-triggered control algorithms.
In virtue of these observations, in this paper, we intend to handle  the robustness of  distributed event-triggered control algorithms against frequency-domain uncertainties.              
 
Distributed control  algorithms with continuous and ideal communications  have been proved robust to various kinds of frequency-domain uncertainties such as additive dynamic uncertainties  \cite{trentelman2013robust}, \cite{li2018robust},
network multiplicative uncertainties  \cite{lirobust2017}, \cite{zelazo2017robustness},\cite{lirobust2019},  coprime factor uncertainties \cite{trentelman2016robust} and so on. It is a natural question that whether  distributed event-triggered algorithms also possess the robustness against frequency-domain uncertainties.    
To answer this question, it is  necessary to  adopt the frequency-domain robust control tools such as the small gain theorem and the  $\mu$ analysis.
However, essentially speaking, the event-triggered algorithms belong to a  special branch of the aperiodic sampling algorithms, whose definition, modeling, and methodology are all based on the time-domain analysis. More specifically, the triggering function, which decides  whether the certain agent updates its state estimation and broadcasts it to its  neighbors, is expressed in  a time-domain form. Moreover, it characterizes a time-domain point-wise inequality constraint of the sampling  error.
Therefore, the analysis and design of the event-triggered control problem have almost always been based on the  Lyapunov stability analysis, which is severely different from the robust analysis and synthesis tools mentioned above. 
In a word, because of the systematic gap between the time-domain event triggering  and the frequency-domain uncertainties, the robustness of event-triggered control of network systems against frequency-domain uncertainties still remains an open and challenging problem. 

To solve this problem, one of the  main difficulties is how to build a bridge between the time-domain sampling mechanism and  frequency-domain uncertainties, or in other words, how to `translate' the sampling mechanism into the  frequency-domain language. 
In this paper, we utilize the operator theory as our basic tool to unify them. 
By studying  the frequency-domain properties of the sampling  errors, we find that in classical event-triggered consensus algorithms, the sampling errors are images of the linear  operators acting on the consensus error of the sampled states. 
It is worth noting that these linear operators are neither generally rational transfer matrices in $\mathcal{RH}_{\infty}$ nor sector bound uncertainties of logarithmic quantizers as in \cite{xie}.    
Nevertheless, we can ascertain from the triggering function that the operators are finite gain $\mathcal L_2 $ stable. The operator  gain, depending on the sampling parameters and the Laplacian matrix of the topology graph, characterizes the extent of the sampling error introduced by event triggering. In light of this, the event-triggered mechanism is  equivalent to additionally introducing a negative feedback loop consisting of the linear operators. 

One of the crucial advantages of the proposed operator-theoretic approach is that we can handle the robustness of the event-triggered consensus problem of network systems with respect to various kinds of frequency-domain uncertainties. In this paper,  we consider additive dynamic uncertainties and network topology uncertainties for illustration. 
Inherent constraints on the robustness, imposed by the parameters of the controllers, the network topology, the bounds of the additive/multiplicative uncertainties, and the gains of the operators representing the event-triggered sampling,  are unveiled. Moreover, the results can be extended to the event-triggered dynamic average consensus (DAC) problem, where the sampling errors are found to be images of nonlinear but finite-gain operators acting on the consensus errors of the sampled states. Especially, we   consider  the event-triggered robust DAC problem and examine the performance of the proposed DAC algorithm under additive dynamic uncertainties. It is shown that the smaller the eigenratio is, the better the robustness and the tracking performance will be  under event triggering  and additive dynamic uncertainties.  

The remaining part of this paper is organized   as follows:
In Section \ref{s2}, we introduce some necessary mathematical preliminaries. In Section \ref{s3}, we revisit the event-triggered consensus algorithm from a  an operator-theoretic perspective. In Section \ref{s4}, we study the robustness of the event-triggered consensus algorithms against frequency-domain uncertainties under the operator-theoretic framework.  
In Section \ref{s5}, we extend the results to the event-triggered DAC problem. Section \ref{s6} provides some simulation examples for illustration and Section \ref{conclusion} concludes this paper. 
      
{ {\it Notations: } The notations used in this paper are fairly standard. $\mathbf R^{n\times m}$ denotes the linear space of all $n\times m$-dimensional matrices. $I_N$ represents the $N$-dimensional identity matrix and $1_N$ represents the $N$-dimensional vector whose elements are all equal to $1$.  $\mathrm{diag}\{a_1,\cdots,a_n\}$ denotes the diagonal matrix whose diagonal elements are equal to $a_1,\cdots,a_n$. The set of all real rational stable transfer matrices is denoted by $\mathcal{RH}_{\infty}$.  }

\section{Mathematical Preliminaries }\label{s2}
This section reviews some useful results and conclusions from the operator  theory in Subsection \ref{A}, from the robust control theory in Subsection \ref{B} and from the graph theory in Subsection \ref{C}, respectively.

\subsection{Operator Theory and Hilbert  Space }\label{A}
 
\begin{definition}\label{df1}
	\cite{Desoer1975feedback} Let $\mathcal H_1$ and $\mathcal H_2$ be two Banach  spaces. An operator $\phi(\cdot):\mathcal H_1 \mapsto \mathcal H_2$ is called a linear operator if the following two conditions hold:
	
	1)  $\phi(x+y)=\phi(x)+\phi(y)$, for $\forall x,y \in \mathcal H_1$;
	
	2) $\phi(\lambda x)=\lambda \phi(x),$ for $\forall$ $\lambda\in \mathbf R$.
\end{definition}
	
%\begin{lemma}\label{linearadd}
%	\cite{Desoer1975feedback} If $\phi(\cdot),\psi(\cdot):\mathcal{H}_1\mapsto \mathcal{H}_2$ are linear  operators, the $\phi+\psi$ is also a linear operator. 
%\end{lemma}

%We  denote  $\mathcal L_2$ the  Banach space with  $\mathcal L_2$ norm  defined.
\begin{definition}
	The $\mathcal L_2$ norm  of a signal $f$  is defined as $$\|f\|_2=\sqrt{\int_0^{\infty}f^*(t)f(t)dt}=\sqrt{\frac{1}{2\pi}\int_{-\infty}^{\infty}f^*(j\omega)f(j\omega)d\omega},$$ where $\mathcal L_2$ denotes the Banach space with $\mathcal L_2$ norm well defined. 
\end{definition}
\begin{definition}
	\cite{Desoer1975feedback} Letting $\phi(\cdot):\mathcal L_2\mapsto\mathcal L_2$ be an operator  such that for $\forall x\in \mathcal L_2$, $\|\phi(x)\|_2\leq\gamma\|x\|_2+\beta$, where $\gamma>0$ and $\beta>0$ are positive constants, then this operator is called a finite-gain operator with operator norm $\|\phi\|_{\infty}\leq\gamma$.
\end{definition}
\begin{lemma}\label{l2}
	\cite{Desoer1975feedback} Let $\phi(\cdot):\mathcal{L}_2\mapsto \mathcal{L}_2$ denote a linear  (finite-gain)  operator in the time domain and suppose that $y(t)=\phi(x(t))$, where $y(t)$ and $x(t)$ are vectors in the  $\mathcal L_2$ space. Denote by $y(s)$ and $x(s)$ the Laplace transformation of $y(t)$ and $x(t)$. It then follows that $y(s)=\Delta(x(s))$, where $  \Delta(\cdot):\mathcal L_2\mapsto \mathcal L_2$ denotes a linear  (finite-gain) operator in the frequency domain. 
	\end{lemma}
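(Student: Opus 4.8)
The plan is to realize $\Delta$ as the conjugation of $\phi$ by the Laplace transform and then transport both the linearity and the gain bound across that transform. Write $\mathcal F$ for the map sending a time-domain signal to its Laplace transform evaluated on the imaginary axis, so that $x(s)=\mathcal F(x(t))$ and $y(s)=\mathcal F(y(t))$. The natural candidate is to \emph{define} $\Delta(\cdot):=\mathcal F\circ\phi\circ\mathcal F^{-1}(\cdot)$, i.e. $\Delta(x(s))=\mathcal F\big(\phi(\mathcal F^{-1}(x(s)))\big)$. With this definition the identity $y(s)=\Delta(x(s))$ is immediate once we know $\mathcal F$ is well defined and invertible on the relevant $\mathcal L_2$ signals, since $\Delta(x(s))=\mathcal F(\phi(x(t)))=\mathcal F(y(t))=y(s)$.

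First I would check linearity of $\Delta$ against the two conditions of Definition \ref{df1}. Because $\mathcal F$ and $\mathcal F^{-1}$ are linear operators (linearity of the integral defining the Laplace transform) and $\phi$ is linear by hypothesis, the composition $\mathcal F\circ\phi\circ\mathcal F^{-1}$ is linear: for any $x(s),z(s)$ one has $\Delta(x(s)+z(s))=\mathcal F(\phi(\mathcal F^{-1}x+\mathcal F^{-1}z))=\mathcal F(\phi(\mathcal F^{-1}x))+\mathcal F(\phi(\mathcal F^{-1}z))=\Delta(x(s))+\Delta(z(s))$, and the homogeneity condition $\Delta(\lambda x(s))=\lambda\Delta(x(s))$ follows the same way. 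Next I would establish the finite-gain property using the norm-preservation already recorded in the definition of the $\mathcal L_2$ norm, namely $\|f\|_2=\sqrt{\int_0^{\infty}f^*(t)f(t)\,dt}=\sqrt{\frac{1}{2\pi}\int_{-\infty}^{\infty}f^*(j\omega)f(j\omega)\,d\omega}$, which is exactly Parseval's identity and says that $\mathcal F$ is an isometry on $\mathcal L_2$. Hence $\|\Delta(x(s))\|_2=\|\mathcal F(\phi(x(t)))\|_2=\|\phi(x(t))\|_2\leq\gamma\|x(t)\|_2+\beta=\gamma\|x(s)\|_2+\beta$, so $\Delta$ satisfies the finite-gain inequality with the same gain $\gamma$ (and the same $\beta$). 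This shows $\|\Delta\|_{\infty}\leq\gamma=\|\phi\|_{\infty}$, so the operator norm is preserved.

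The main obstacle, and the only point requiring care beyond routine manipulation, is justifying that $\mathcal F$ is a well-defined bijective isometry between the appropriate $\mathcal L_2$ spaces, so that $\mathcal F^{-1}$ in the definition of $\Delta$ makes sense and $\Delta$ is defined on all relevant frequency-domain signals. For one-sided signals in $\mathcal L_2[0,\infty)$ the Laplace transform restricted to $s=j\omega$ coincides with the Fourier transform, and by the Plancherel/Paley--Wiener theorem it is an isometric isomorphism onto the Hardy space of boundary functions; this is precisely the content invoked by the two equal expressions in the $\mathcal L_2$-norm definition, so I would cite Parseval's theorem rather than reprove it. Once invertibility and the isometry are in hand, the two displayed steps above complete the argument, and no further estimates are needed.
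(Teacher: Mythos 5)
The paper contains no proof of this lemma to compare yours against: it is imported verbatim, with a citation to \cite{Desoer1975feedback}, and used as a black box. Your argument is the standard proof of this fact and is essentially correct: defining $\Delta=\mathcal F\circ\phi\circ\mathcal F^{-1}$ makes $y(s)=\Delta(x(s))$ tautological, linearity follows because a composition of linear maps is linear, and the gain bound transfers through Parseval's identity, which is precisely the equality of the two expressions in the paper's own definition of the $\mathcal L_2$ norm. The one point that deserves slightly more care than you give it is the domain of $\Delta$: for causal signals in $\mathcal L_2[0,\infty)$ the image of $\mathcal F$ is not all of frequency-domain $\mathcal L_2$ but the Hardy subspace $H^2$ of the right half-plane (Paley--Wiener), so $\mathcal F^{-1}$, and hence your $\Delta$, is a priori defined only on $H^2$ rather than on the full space $\mathcal L_2\mapsto\mathcal L_2$ claimed in the statement. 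This is harmless for every use of the lemma in the paper, since the argument $x(s)$ is always the transform of a causal $\mathcal L_2$ signal, and if one insists on the stated domain it suffices to precompose $\Delta$ with the orthogonal projection of $\mathcal L_2$ onto $H^2$, which preserves both linearity and the bound $\|\Delta\|_{\infty}\leq\gamma$; you flag the issue and resolve it by restriction, which is adequate. One cosmetic slip: $\gamma$ is only an upper bound on $\|\phi\|_{\infty}$, so the conclusion should read $\|\Delta\|_{\infty}\leq\gamma$ rather than asserting equality of operator norms, though equality does follow by running the same conjugation argument in the reverse direction.
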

%\begin{remark}
%	This Lemma holds naturally due to  the linearity of the Laplace transformation and the Paserval equality relationship between a signal in time domain and its Laplace transformation in the frequency domain.                
%\end{remark}

%\begin{lemma}\label{l3}
%	If $\phi(\cdot):\mathcal L_2\mapsto\mathcal L_2'$ and $\psi(\cdot):\mathcal L_2'\mapsto\mathcal L_2''$ are  linear (finite-gain) operators, then $\psi(\phi(\cdot)):\mathcal L_2\mapsto\mathcal L_2''$ is also a linear (finite-gain) operator.   
%\end{lemma}

\subsection{Robust Control Theory}\label{B}

\begin{lemma}\label{smallgain}
	\cite{Desoer1975feedback} (Small Gain Theorem) Supposing that $G(\cdot), \Delta(\cdot):\mathcal L_2\mapsto\mathcal L_2$ are finite-gain operators with operator norms $\|G\|_{\infty}=\gamma_1$ and $\|\Delta\|_{\infty}=\gamma_2$, the system interconnection shown in Fig. \ref{loop} is internally stable, if $\gamma_1\gamma_2<1$.
 \end{lemma}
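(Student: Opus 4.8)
The plan is to establish internal stability by deriving an a priori bound on the internal loop signals in terms of the exogenous inputs, using the condition $\gamma_1\gamma_2<1$ to \emph{close the loop} algebraically. First I would fix notation for the interconnection of Fig. \ref{loop}: writing the external inputs as $u_1,u_2$ and the internal signals feeding $G$ and $\Delta$ as $e_1,e_2$, the negative feedback loop reads $e_1=u_1-\Delta(e_2)$ and $e_2=u_2+G(e_1)$. Internal stability then amounts to showing that whenever $u_1,u_2\in\mathcal L_2$, the signals $e_1,e_2$ (and hence the outputs $G(e_1)$ and $\Delta(e_2)$) also lie in $\mathcal L_2$, with norms bounded by a constant multiple of the input norms.

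The core step is a pair of triangle-inequality estimates built on the finite-gain definition. Applying $\|\Delta(e_2)\|_2\leq\gamma_2\|e_2\|_2+\beta_2$ and $\|G(e_1)\|_2\leq\gamma_1\|e_1\|_2+\beta_1$ to the loop equations gives
\begin{align}
\|e_1\|_2 &\leq \|u_1\|_2+\gamma_2\|e_2\|_2+\beta_2, \\
\|e_2\|_2 &\leq \|u_2\|_2+\gamma_1\|e_1\|_2+\beta_1.
\end{align}
Substituting the second inequality into the first and collecting the $\|e_1\|_2$ terms yields $(1-\gamma_1\gamma_2)\|e_1\|_2\leq\|u_1\|_2+\gamma_2\|u_2\|_2+\gamma_2\beta_1+\beta_2$. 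Because $\gamma_1\gamma_2<1$, the coefficient $1-\gamma_1\gamma_2$ is strictly positive, so dividing through produces a finite bound on $\|e_1\|_2$; the symmetric manipulation bounds $\|e_2\|_2$. Both internal signals are therefore finite-gain images of the inputs, which is precisely internal $\mathcal L_2$-stability.

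The hard part will be making this rigorous rather than merely formal, since the loop equations define $e_1,e_2$ only implicitly and one cannot assume a priori that they belong to $\mathcal L_2$ before the bound is in hand. The standard remedy, which I would adopt, is to carry out the entire estimate in the extended space $\mathcal L_{2e}$ using truncated signals and the causality of $G$ and $\Delta$: the inequalities above hold verbatim for the truncations with constants independent of the truncation horizon, and the resulting uniform bound lets one pass to the limit (by monotone convergence) to conclude $e_1,e_2\in\mathcal L_2$. This truncation argument is the only genuinely delicate point; once well-posedness is settled, the small-gain bound itself is just the elementary fixed-point-type estimate above. Since in this paper $G$ and $\Delta$ are the causal finite-gain operators identified from the event-triggered feedback system, this standard argument applies directly.
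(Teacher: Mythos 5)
The paper does not prove this lemma at all: it is quoted verbatim as a known result from the cited reference (Desoer and Vidyasagar, \emph{Feedback Systems: Input-Output Properties}), so there is no in-paper proof to compare against. Your argument is the standard small-gain proof found in that reference—loop equations, triangle-inequality estimates with the bias terms $\beta_1,\beta_2$ retained, closing the loop via $1-\gamma_1\gamma_2>0$, and handling the implicit definition of $e_1,e_2$ by truncation, causality, and a uniform-in-horizon bound—and it is correct, with the usual caveat that existence of solutions in the extended space $\mathcal L_{2e}$ (well-posedness) is taken as a hypothesis rather than derived from the truncation estimate itself.
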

% \begin{remark}
% 	Note that in the small gain theorem, both the operators are not required to be linear. The internal stability holds as long as multiplication of the two operator norms  $\|G\|_{\infty}$ and $\|\Delta\|_{\infty}$ is less than or equal to $1$.
 %\end{remark}  
\begin{definition}[\cite{zhou1998essentials}]\label{added} { 
	Let $\mathbf\Delta$ represent the set of structured finite-gain $\mathcal L_2$-{ stable }operators. For $M\in \mathbf R^{m\times n}$, $\mu_{\mathbf\Delta}(M)$ is defined as 
	$$\mu_{\mathbf\Delta}(M)=\frac{1}{\min \{\bar\sigma(\Delta):\Delta\in \mathbf \Delta, \det(I-M\Delta)=0 \}}$$
	unless no $\Delta\in \mathbf \Delta$ makes $\det(I-M\Delta)$ singular, in which case $\mu_{\mathbf\Delta}(M):=0$.}
	
\end{definition}

\begin{lemma}[\cite{zhou1998essentials}]\label{l3}
Let $\mathbf\Delta$ represent the set of structured finite-gain $\mathcal L_2$-{ stable }operators. The loop shown in Fig. \ref{loop} is well-posed and internally stable for all $\Delta\in \mathbf\Delta$ with operator norm $\|\Delta\|_{\infty}\leq\gamma$ if and only if  
$
\sup_{\omega\in \mathbf{R}}\mu_{\mathbf\Delta}(G(j\omega))<\frac{1}{\gamma},
$
where $\mu_{\mathbf\Delta}(\cdot)$ denotes the structured singular value. 
\begin{figure}[!t]
\centerline{\includegraphics[width=0.8\columnwidth]{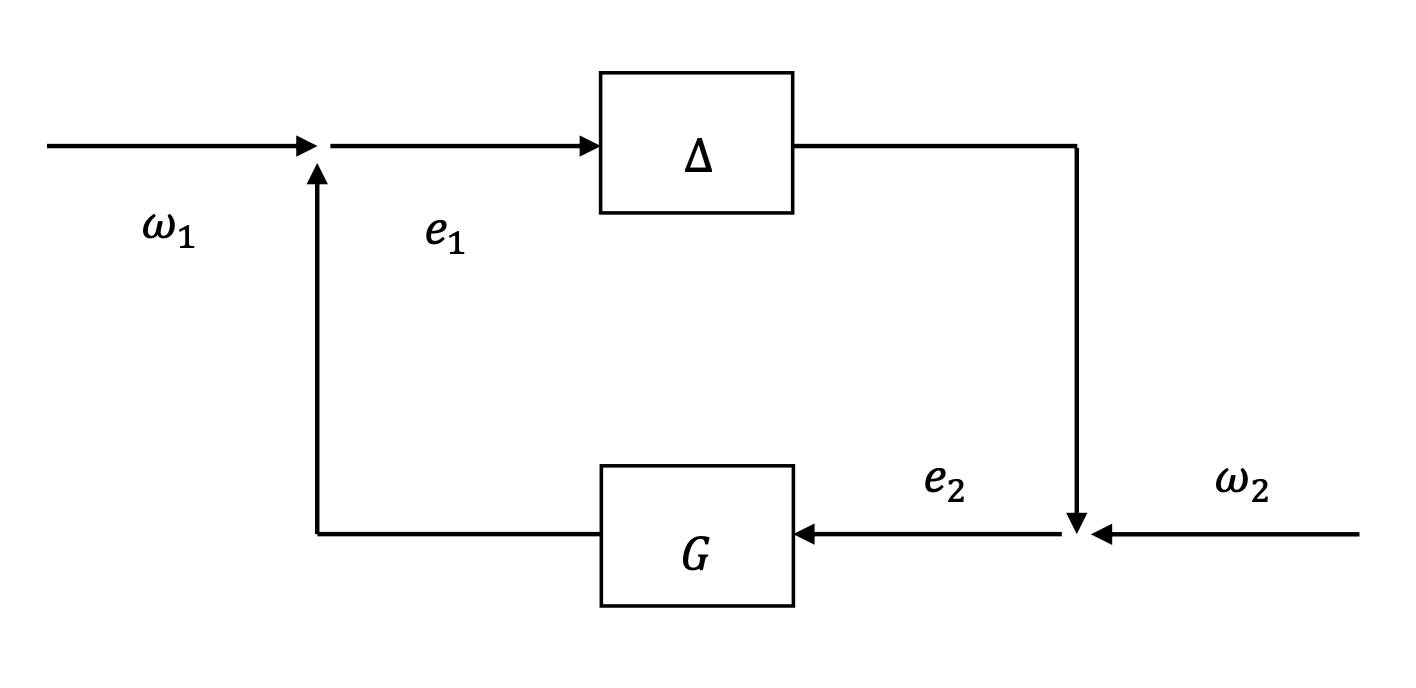}}
\caption{System interconnection.}
\label{loop}
\end{figure}
\end{lemma}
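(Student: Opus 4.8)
The plan is to follow the classical $\mu$-analysis robust stability argument, reducing internal stability of the $G$--$\Delta$ interconnection in Fig.~\ref{loop} to a frequency-wise nonsingularity condition and then matching it against the definition of $\mu_{\mathbf\Delta}$ in Definition~\ref{added}. First I would recall that, under well-posedness, the loop is internally stable if and only if the return-difference operator $I-G\Delta$ has a finite-gain $\mathcal L_2$-stable inverse; by Lemma~\ref{l2} this is equivalent to $\det(I-G(j\omega)\Delta(j\omega))\neq 0$ for every $\omega\in\mathbf R$ together with the usual winding-number (generalized Nyquist) bookkeeping. I would also restate Definition~\ref{added} in the convenient form $1/\mu_{\mathbf\Delta}(M)=\min\{\bar\sigma(\Delta):\Delta\in\mathbf\Delta,\ \det(I-M\Delta)=0\}$, so that $\mu_{\mathbf\Delta}(M)<1/\gamma$ is exactly the statement that no $\Delta\in\mathbf\Delta$ with $\bar\sigma(\Delta)\le\gamma$ renders $I-M\Delta$ singular.

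For sufficiency I would argue by a homotopy-and-contradiction scheme. Fix any $\Delta\in\mathbf\Delta$ with $\|\Delta\|_\infty\le\gamma$ and consider the family $\tau\Delta$ for $\tau\in[0,1]$. At $\tau=0$ the loop is trivially stable; if it were unstable at $\tau=1$, continuity of the Nyquist contour forces some critical $\tau^\ast\in(0,1]$ and some frequency $\omega_0$ at which $\det(I-G(j\omega_0)\,\tau^\ast\Delta(j\omega_0))=0$. The frozen perturbation $\tau^\ast\Delta(j\omega_0)$ then lies in $\mathbf\Delta$ with $\bar\sigma(\tau^\ast\Delta(j\omega_0))\le\gamma$, so by the restated definition $\mu_{\mathbf\Delta}(G(j\omega_0))\ge 1/\gamma$, contradicting $\sup_\omega\mu_{\mathbf\Delta}(G(j\omega))<1/\gamma$. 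Hence the loop stays internally stable for every admissible $\Delta$.

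For necessity I would proceed by contraposition: assuming $\sup_\omega\mu_{\mathbf\Delta}(G(j\omega))\ge 1/\gamma$, I would pick a frequency $\omega_0$ (or a sequence approaching the supremum) at which there is a constant structured matrix $\Delta_0$ with $\bar\sigma(\Delta_0)\le\gamma$ and $\det(I-G(j\omega_0)\Delta_0)=0$. The substantive step, which I expect to be the main obstacle, is to promote this frozen-frequency constant destabilizing matrix into a genuine finite-gain $\mathcal L_2$-stable operator $\Delta\in\mathbf\Delta$ with $\|\Delta\|_\infty\le\gamma$ whose frequency response equals $\Delta_0$ at $\omega_0$ (the analogue of the rational interpolation used in the transfer-matrix setting of \cite{zhou1998essentials}). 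Because $\mathbf\Delta$ is here the broad class of structured finite-gain $\mathcal L_2$-stable operators rather than only elements of $\mathcal{RH}_\infty$, this realization is in fact more flexible than in the classical case: one may build $\Delta$ blockwise, matching the magnitude and phase of each structured block at $\omega_0$ without ever exceeding gain $\gamma$, and use Lemma~\ref{l2} to pass between the time- and frequency-domain descriptions. The resulting $\Delta$ makes $I-G\Delta$ singular at $\omega_0$, so the return difference has no bounded inverse and the loop is not internally stable, completing the contrapositive. Finally, I would note that the unstructured small-gain statement of Lemma~\ref{smallgain} is recovered as the special case in which $\mathbf\Delta$ is full and $\mu_{\mathbf\Delta}=\bar\sigma$.
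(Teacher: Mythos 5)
The paper never proves Lemma~\ref{l3}: it is imported, with attribution, from \cite{zhou1998essentials}, and---importantly---with the perturbation class silently enlarged from structured transfer matrices in $\mathcal{RH}_\infty$ to general structured finite-gain $\mathcal L_2$-stable operators. So there is no in-paper argument to compare against; what you have written is a reconstruction of the classical textbook proof, and judged as a proof of the \emph{LTI} version of the statement it is essentially correct (modulo the usual care with the strict/non-strict inequalities when the supremum of $\mu$ is only approached, which you wave at via ``a sequence approaching the supremum'').

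The genuine gap is that your argument does not apply to the class $\mathbf\Delta$ the lemma actually quantifies over. A general finite-gain $\mathcal L_2$-stable operator---in particular the time-varying sampling-error operators $\Delta$, $\Delta^b$, $\Delta^d$, $\Delta^e$ constructed in this very paper, which are the whole reason the class was enlarged---is not time-invariant and has no frequency response. Consequently the objects your proof manipulates, $\Delta(j\omega_0)$, $\det\bigl(I-G(j\omega)\Delta(j\omega)\bigr)$, and the Nyquist winding number of the return difference, are simply undefined for most members of $\mathbf\Delta$, and the frozen-frequency homotopy in your sufficiency step cannot even be set up. This is not a repairable technicality: for structured \emph{time-varying} finite-gain perturbations, the frequency-wise condition $\sup_\omega\mu_{\mathbf\Delta}(G(j\omega))<1/\gamma$ is known to be insufficient in general; the exact test involves the constant $D$-scaled upper bound $\inf_{D}\|DGD^{-1}\|_\infty$, which can strictly exceed $\sup_\omega\mu_{\mathbf\Delta}(G(j\omega))$ (results of the Shamma/Poolla--Tikku/Megretski type). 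So the ``if'' direction---the direction this paper actually invokes in its Theorems 4, 5, and 8---cannot be obtained by freezing frequencies and, as literally stated for this operator class, is false in general. Your necessity direction, by contrast, is sound and, as you correctly observe, even easier than in the textbook setting: the destabilizing perturbation may be taken LTI (a constant structured matrix phase-corrected blockwise by stable all-pass factors so that its frequency response at $\omega_0$ equals $\Delta_0$ without increasing its gain), and such LTI operators belong to the larger class. A correct treatment would either restrict $\mathbf\Delta$ in the lemma to structured $\mathcal{RH}_\infty$ perturbations (in which case your proof stands, but the lemma no longer covers the sampling operators), or replace $\mu$ by a scaled small-gain condition valid for arbitrary finite-gain structured operators.
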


\begin{lemma}[\cite{zhou1998essentials}]\label{l4}{ 
 Assume that $G(s)=\left[\begin{smallmatrix}
		G_{11}(s) & G_{12}(s)\\
		G_{21}(s) & G_{22}(s)
	\end{smallmatrix}\right ]$
	and $\mathbf\Delta=\left[\begin{smallmatrix}
		\mathbf\Delta_1& 0\\
		 0 & \mathbf\Delta_2
	\end{smallmatrix}\right]$ is the set of all finite-gain $\mathcal L_2$ stable block diagonal  operators with compatible dimensions with $G(s)$. Then we have $\mu_{\mathbf\Delta}(G(j\omega))\leq\sqrt{\|G_{11}(j\omega)\|^2+\|G_{22}(j\omega)\|^2+2\|G_{12}(j\omega)\|\|G_{21}(j\omega)\|}$, $\forall \omega\in \mathbf R$.
%\end{lemma}
%\begin{lemma}\label{l6}
%	\cite{zhou1998essentials} 
Moreover, suppose that ${ \mathbf\Delta}=\left[\begin{smallmatrix}
		\mathbf \Delta_1 & &\\
		& \mathbf \Delta_2 &\\
		& & \mathbf \Delta_3
	\end{smallmatrix}\right]$ and $G=\left[\begin{smallmatrix}
		G_{11} & G_{12} & G_{13}\\
		G_{21} & G_{22} & G_{23}\\
		G_{31} & G_{32} & G_{33}
	\end{smallmatrix}\right].$
	Then,
	$$\begin{aligned}
		&\mu_{\mathbf \Delta}^2(G(j\omega))\leq\|G_{11}(j\omega)\|^2+\|G_{22}(j\omega)\|^2+\|G_{33}(j\omega)\|^2\\
		& \quad+2\|G_{12}(j\omega)\|\|G_{21}(j\omega)\|+2\|G_{13}(j\omega)\|\|G_{31}(j\omega)\|\\
		&\quad+ 2\|G_{23}(j\omega)\|\|G_{32}(j\omega)\|.
	\end{aligned}
	$$}
\end{lemma}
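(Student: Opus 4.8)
The plan is to bound $\mu_{\mathbf\Delta}(M)$ with $M:=G(j\omega)$ straight from Definition \ref{added}: since $\mu_{\mathbf\Delta}(M)^{-1}$ is the smallest $\bar\sigma(\Delta)$ rendering $\det(I-M\Delta)$ singular, it suffices to produce a uniform \emph{lower} bound on $\bar\sigma(\Delta)$ over all block-diagonal $\Delta\in\mathbf\Delta$ with $\det(I-M\Delta)=0$. First I would turn the determinant condition into an eigenvector relation: $\det(I-M\Delta)=0$ supplies a nonzero $v=(v_1,v_2)$ (resp.\ $(v_1,v_2,v_3)$) with $M\Delta v=v$. Setting $u_i:=\Delta_i v_i$, so that $\|u_i\|\le\bar\sigma(\Delta_i)\|v_i\|\le\sigma\|v_i\|$ with $\sigma:=\bar\sigma(\Delta)$, the block rows $v_i=\sum_j G_{ij}u_j$ together with the triangle inequality and submultiplicativity give $\|v_i\|\le\sigma\sum_j\|G_{ij}\|\,\|v_j\|$. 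Assembling the nonnegative vector $\xi:=(\|v_1\|,\|v_2\|,\dots)^{\top}\ne 0$ and the nonnegative gain matrix $P:=(\|G_{ij}\|)_{i,j}$, this is the componentwise inequality $\xi\le\sigma P\xi$.

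Next I would invoke Perron--Frobenius to convert this into a spectral statement. For a nonnegative matrix $P$, the existence of $\xi\ge 0$, $\xi\ne 0$ with $P\xi\ge\sigma^{-1}\xi$ forces the spectral radius to satisfy $\rho(P)\ge\sigma^{-1}$; one sees this by pairing the inequality on the left with the nonnegative left Perron eigenvector of $P$. Consequently every destabilizing $\Delta$ obeys $\bar\sigma(\Delta)\ge 1/\rho(P)$, and therefore $\mu_{\mathbf\Delta}(M)\le\rho(P)$.

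It then remains to bound $\rho(P)$ by the claimed quantity. I would introduce the symmetric nonnegative matrix $S$ with $S_{ii}=\|G_{ii}\|$ and $S_{ij}=\sqrt{\|G_{ij}\|\,\|G_{ji}\|}$, whose Frobenius norm satisfies $\|S\|_F^2=\sum_i\|G_{ii}\|^2+2\sum_{i<j}\|G_{ij}\|\,\|G_{ji}\|$ --- exactly the right-hand side of both asserted inequalities. Since $P$ and $S$ carry the same off-diagonal products $P_{ij}P_{ji}=S_{ij}^2$, they are diagonally similar through $D=\mathrm{diag}(d_i)$ with $d_i/d_j=\sqrt{\|G_{ji}\|/\|G_{ij}\|}$, so $\rho(P)=\rho(S)$; and because $S$ is symmetric, $\rho(S)^2\le\sum_k\lambda_k(S)^2=\mathrm{tr}(S^2)=\|S\|_F^2$. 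Chaining the estimates yields $\mu_{\mathbf\Delta}(M)\le\rho(P)=\rho(S)\le\|S\|_F$, which is the claim; the two-block case can alternatively be finished by writing $\rho(P)$ via the quadratic formula and completing the square, the decisive step being $\big(\|G_{11}\|+\|G_{22}\|-\sqrt{(\|G_{11}\|-\|G_{22}\|)^2+4\|G_{12}\|\,\|G_{21}\|}\big)^2\ge 0$. The three-block inequality follows by the identical route with the $3\times 3$ matrices $P$ and $S$.

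The step I expect to be the main obstacle is precisely $\rho(P)\le\|S\|_F$. The reduction $\rho(P)=\rho(S)$ to a symmetric matrix --- after which the Frobenius estimate is immediate --- holds automatically only when the interconnection pattern is reciprocal (and, for the three-block case, the two cyclic gain products agree, so that the similarity scaling $D$ is consistent). This is the regime generated by the additive and network-multiplicative uncertainty structures of Sections \ref{s4}--\ref{s5}; I would therefore have to check that the $G(j\omega)$ produced there never degenerates into a genuine directed-cycle configuration, for which $\rho(P)$ can strictly exceed $\|S\|_F$ and the estimate would have to stop at the spectral radius $\rho(P)$ itself.
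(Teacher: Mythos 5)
The paper never proves Lemma \ref{l4}; it is quoted from \cite{zhou1998essentials} as a preliminary, so there is no in-paper argument to compare against and your proposal must be judged on its own. Your two-block argument is correct and complete. The chain $\mu_{\mathbf\Delta}(M)\le\rho(P)$ (eigenvector relation from $\det(I-M\Delta)=0$, gains matrix $P$, Perron--Frobenius) is sound, and for $2\times 2$ matrices $P$ and $S$ have equal trace and equal determinant, hence identical characteristic polynomials, so $\rho(P)=\rho(S)\le\|S\|_F$ holds with no positivity caveat; this observation (or your quadratic-formula variant, whose decisive square is exactly the right one) also repairs the degenerate case $\|G_{12}\|\,\|G_{21}\|=0$, where the diagonal similarity you invoke is undefined. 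One technical point needs patching: the left-Perron pairing requires $w^{\top}\xi>0$, which can fail for reducible $P$; either restrict to the principal submatrix on the support of $\xi$, or perturb $P\mapsto P+\epsilon\mathbf 1\mathbf 1^{\top}$ and let $\epsilon\to 0$.

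The obstacle you flag in the three-block case, however, is not a removable defect of your route: the stated inequality is false in general, so no argument can close the gap. Take scalar blocks with $G_{12}=G_{23}=G_{31}=1$ and every other $G_{ij}=0$. The right-hand side of the lemma is then $0$, yet $\Delta=\mathrm{diag}(1,1,1)\in\mathbf\Delta$ has $\bar\sigma(\Delta)=1$ and $\det(I-G\Delta)=1-1\cdot1\cdot1=0$, so Definition \ref{added} gives $\mu_{\mathbf\Delta}(G)\ge 1$; small perturbations making all blocks nonzero keep $\mu$ near $1$ while the right-hand side stays near $0$, so this is not a degenerate configuration. The three-block bound is valid only under a side condition --- precisely your ``reciprocity'' requirement that the two directed-cycle gain products $\|G_{12}\|\|G_{23}\|\|G_{31}\|$ and $\|G_{13}\|\|G_{32}\|\|G_{21}\|$ coincide (or one of them vanishes together with its cycle), which is exactly what makes your scaling $D$ consistent; without it, the correct structure-free conclusion of your argument is $\mu_{\mathbf\Delta}(G(j\omega))\le\rho(P)$, which can strictly exceed $\|S\|_F$. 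Note, finally, that your closing worry applies to the paper itself: in the $G$ of \dref{eq26} used in Theorem \ref{th8}, all six off-diagonal blocks are nonzero and the two cycle products are not equal in general, so Theorem \ref{th8}, as justified by Lemma \ref{l4}, inherits this gap; a rigorous repair is to invoke only $\mu\le\rho(P)$ and strengthen \dref{condition} to require $\rho(P(j\omega))<1/\gamma$ for all $\omega\in\mathbf R$.
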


\begin{lemma}[\cite{zhou1998essentials}]\label{l5}
	Let $\gamma>0$ and  $\mathcal F_u(M,\Delta)=M_{22}+M_{21}\Delta(I-M_{11}\Delta)^{-1}M_{12}$ denote the upper linear fractional transformation with respect to $\Delta$. For all $\Delta\in \mathbf \Delta$ with $\|\Delta\|_{\infty}\leq\gamma$, the transfer function $\mathcal F_u(G_p,\Delta)$ is internally stable  and $\|\mathcal F_u(G_p,\Delta)\|_{\infty}<\frac{1}{\gamma}$ if and only if 
	$
	\sup_{\omega\in \mathbf R} \mu_{\mathbf\Delta_p}(G_p(j\omega))<\frac{1}{\gamma},
	$
	where $\mathbf\Delta_p=\left[\begin{smallmatrix}
		\mathbf\Delta & \\
		 & \mathbf\Delta_f
	\end{smallmatrix}\right].$
\end{lemma}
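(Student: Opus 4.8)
The plan is to recognize this statement as the robust-performance version of the structured-singular-value test — the so-called main loop theorem — and to prove it by introducing the fictitious full performance block $\mathbf\Delta_f$ so that the $\mathcal H_\infty$ performance bound is itself recast as a (fictitious) robust-stability condition. First I would invoke Lemma~\ref{l2} to pass from the operators to their frequency-domain images, so that at each fixed $\omega$ the objects $G_p(j\omega)$, $\Delta(j\omega)$, $\Delta_f(j\omega)$ are ordinary complex matrices and Definition~\ref{added} applies pointwise. Throughout I would keep $\gamma$ fixed; a rescaling $\Delta\mapsto\gamma\Delta$, $\Delta_f\mapsto\gamma\Delta_f$ reduces everything to the normalized case $\gamma=1$ if convenient.

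The algebraic core is a determinant factorization. Writing $G_p=\left[\begin{smallmatrix} P_{11} & P_{12}\\ P_{21} & P_{22}\end{smallmatrix}\right]$ conformably with $\Delta_p=\mathrm{diag}\{\Delta,\Delta_f\}$, a Schur-complement expansion gives
$$\det(I-G_p\Delta_p)=\det(I-P_{11}\Delta)\,\det\!\big(I-\mathcal F_u(G_p,\Delta)\Delta_f\big)$$
whenever $I-P_{11}\Delta$ is invertible, since $\mathcal F_u(G_p,\Delta)=P_{22}+P_{21}\Delta(I-P_{11}\Delta)^{-1}P_{12}$ is exactly the Schur complement that appears. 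This identity is the bridge: it separates the uncertainty channel $\Delta$ from the performance channel $\Delta_f$ inside a single determinant condition, which is precisely what $\mu_{\mathbf\Delta_p}(G_p(j\omega))$ measures.

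Both directions then follow from the defining equivalence $\mu_{\mathbf\Delta_p}(G_p(j\omega))<1/\gamma \Longleftrightarrow \det(I-G_p(j\omega)\Delta_p)\neq0$ for every $\Delta_p\in\mathbf\Delta_p$ with $\bar\sigma(\Delta_p)\le\gamma$, used together with the factorization and two standard facts. Setting $\Delta_f=0$ isolates $\det(I-P_{11}\Delta)\neq0$, which by Lemma~\ref{l3} is equivalent to internal stability of $\mathcal F_u(G_p,\Delta)$ for all admissible $\Delta$. Given that stability, the remaining factor forces $\det(I-\mathcal F_u(G_p,\Delta)\Delta_f)\neq0$ for all $\Delta_f$ with $\bar\sigma(\Delta_f)\le\gamma$; because $\mathbf\Delta_f$ is a single full block, $\mu_{\mathbf\Delta_f}(N)=\bar\sigma(N)$, so this last condition is equivalent to $\bar\sigma(\mathcal F_u(G_p,\Delta)(j\omega))<1/\gamma$ at every $\omega$, i.e. to $\|\mathcal F_u(G_p,\Delta)\|_\infty<1/\gamma$. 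Running these equivalences forward yields ($\Leftarrow$) and backward yields ($\Rightarrow$).

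I expect two points to require the most care. The first is the identity $\mu_{\mathbf\Delta_f}(N)=\bar\sigma(N)$ for a full block, which I would establish by exhibiting a rank-one destabilizing $\Delta_f$ of norm $1/\bar\sigma(N)$ built from the leading singular vectors of $N$ and checking that no smaller one exists; this is exactly what converts an $\mathcal H_\infty$ bound into a $\mu$/determinant statement and is the true mechanism behind the theorem. The second is the passage from the pointwise (each-$\omega$) strict inequalities to the uniform bound $\sup_{\omega}\mu_{\mathbf\Delta_p}(G_p(j\omega))<1/\gamma$: one must rule out a supremum attained only in the limit, which I would handle via continuity of $\mu_{\mathbf\Delta_p}(G_p(j\omega))$ in $\omega$ together with its behaviour as $\omega\to\infty$, i.e. a compactness argument on the extended frequency axis. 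Everything else — the Schur expansion, the $\Delta_f=0$ specialization, and the appeal to Lemma~\ref{l3} — is routine once the frequency-domain reduction of Lemma~\ref{l2} is in place.
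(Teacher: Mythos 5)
The paper gives no proof of this lemma at all: it is imported verbatim from the cited reference \cite{zhou1998essentials} (it is the robust performance / main loop theorem of Zhou and Doyle), so there is no in-paper argument to compare yours against. Judged on its own merits, your reconstruction is essentially the standard textbook proof from that source, and it is sound in outline: the Schur-complement factorization $\det(I-G_p\Delta_p)=\det(I-P_{11}\Delta)\det\bigl(I-\mathcal F_u(G_p,\Delta)\Delta_f\bigr)$, the full-block identity $\mu_{\mathbf\Delta_f}(N)=\bar\sigma(N)$, the specialization $\Delta_f=0$ to recover the robust-stability condition via Lemma~\ref{l3}, and the conversion of the remaining determinant condition into the $\mathcal H_\infty$ bound are exactly the mechanism behind the theorem. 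Two caveats, both of which you partially anticipate: first, in the necessity direction the frequency-pointwise violation $\bar\sigma\bigl(\mathcal F_u(G_p,\Delta)(j\omega_0)\bigr)\geq 1/\gamma$ must be converted into an \emph{admissible} destabilizing perturbation, i.e.\ the worst-case constant complex matrix at $\omega_0$ has to be realized (or interpolated) by an operator in the allowed class $\mathbf\Delta_p$ with norm at most $\gamma$ --- your continuity/compactness remark addresses the supremum issue but not this interpolation step, which is the one genuinely nontrivial omission; second, the paper's framework lets $\Delta$ range over general finite-gain $\mathcal L_2$-stable operators while $\mu$ in Definition~\ref{added} is defined on matrices, so the pointwise-in-frequency treatment you adopt inherits the same looseness as the paper's own Lemmas~\ref{l3} and~\ref{l4}; that is acceptable here, since the paper itself never resolves it and simply defers to the cited text.
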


\subsection{Graph Theory}\label{C}
An  undirected graph $\mathcal G\{\mathcal V,\mathcal E, \mathcal W\}$  describes the network topology among the agents, where $\mathcal V=\{1,2,\cdots, N\}$ denotes the set of vertices, $\mathcal E =\{1,2,\cdots,m\}$ denotes the set of the edges, and $\mathcal W=\{\cdots,w_{(i,j)},\cdots\}$ or equivalently $\{w_1,w_2,\cdots,w_m\}$ denotes the set of the weights corresponding to the edges.
The adjacency matrix of the graph $\mathcal G$ is denoted as $\mathcal A$ and $a_{ij}$ is the $(i,j)$-th element of $\mathcal A$ defined  as $a_{ij}=w_{ij}$ if $(i,j)\in \mathcal E$ and  $a_{ij}=0$ otherwise. Letting $d_i=\sum_{j=1}^Na_{ij}$ be the degree of the node $i$  and $H=\mathrm{diag}\{d_1,\cdots,d_N\}$. The Laplacian matrix of the graph $\mathcal G$ is then defined as $L=H-A$. Let $D\in \mathbf R^{N\times m}$ be the incidence matrix of $\mathcal G$ such that $d_{ij}=-1$ if $i$ is the tail of the edge $(i,j)$ and $d_{ij}=1$ if $i$ is the head of the edge $(i,j)$ and $d_{ij}=0$ otherwise. It is easy to see that the sum of  each column of $D$ is equal to $0$. 
\begin{lemma}
	\cite{Mesbahi} For an undirected graph $\mathcal G$, the Laplacian matrix $L=DWD^T$, where $W=\mathrm{diag}\{w_1,w_2,\cdots,w_{m}\}$.
\end{lemma}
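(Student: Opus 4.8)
The plan is to verify the identity entrywise by expanding $DWD^T$ as a sum of rank-one terms indexed by the edges, and then to match the resulting entries against the definition $L=H-A$. First I would write $DWD^T=\sum_{k=1}^m w_k b_k b_k^T$, where $b_k\in \mathbf R^N$ denotes the $k$-th column of the incidence matrix $D$, i.e. the column associated with the edge $e_k$. By the definition of $D$ given above, if $e_k$ joins vertices $i$ and $j$ (say $i$ is the tail and $j$ the head), then $b_k$ has a single $-1$ in position $i$, a single $+1$ in position $j$, and zeros elsewhere.

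Next I would compute the rank-one matrix $b_k b_k^T$ for a fixed edge $e_k=(i,j)$. It has exactly four nonzero entries: $+1$ at positions $(i,i)$ and $(j,j)$, and $-1$ at positions $(i,j)$ and $(j,i)$. Hence $w_k b_k b_k^T$ contributes $w_k$ to each of the diagonal entries $(i,i)$ and $(j,j)$, and $-w_k$ to each of the off-diagonal entries $(i,j)$ and $(j,i)$. A key observation here is that this contribution is independent of how the edge is oriented in $D$, since flipping the sign of $b_k$ leaves $b_k b_k^T$ unchanged; this is precisely what makes the formula meaningful for an \emph{undirected} graph despite $D$ encoding an orientation.

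Summing over all edges then splits into two cases. For a diagonal entry $(p,p)$, only the edges incident to vertex $p$ contribute, each adding its weight, so $(DWD^T)_{pp}=\sum_{k:\,p\in e_k} w_k=\sum_{q=1}^N a_{pq}=d_p$, which is exactly the $p$-th diagonal entry of $H$. For an off-diagonal entry $(p,q)$ with $p\neq q$, the only possible contribution comes from an edge directly joining $p$ and $q$, yielding $(DWD^T)_{pq}=-a_{pq}$ (and $0$ when no such edge exists). Since $A$ has zero diagonal, combining the two cases shows that $DWD^T$ has diagonal $H$ and off-diagonal part $-A$, i.e. $DWD^T=H-A=L$.

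This argument is a direct computation, so I do not expect any genuine obstacle. The only points requiring care are the orientation-independence noted above and the use of the fact that an undirected simple graph admits at most one edge between any pair of distinct vertices, which is what guarantees that the off-diagonal sum collapses to the single term $-a_{pq}$ rather than an accumulation of several weights.
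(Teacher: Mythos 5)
Your proof is correct: the rank-one expansion $DWD^T=\sum_{k=1}^m w_k b_k b_k^T$, the orientation-independence of $b_k b_k^T$, and the entrywise match with $H-A$ are all sound, and this is the standard argument. The paper itself gives no proof (the lemma is quoted from the cited reference), and your computation is essentially the same derivation found there, so there is nothing to reconcile.
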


\section{ Revisit of the Event-Triggered Algorithm From a Robust Control Perspective}\label{s3}
Consider a network  consisting of $N$ single-input-single-output agents with scalar states. The dynamics of each agent can be described by a single integrator: 
\begin{equation}\label{eq1}
\begin{aligned}
	&\dot x_i = u_i, \quad  i=1,\cdots,N,\\
\end{aligned}
\end{equation} with $x_i$ as the state variable of agent $i$ and $u_i$  the control input. 
It is assumed that the communication among the agents is depicted by a graph $\mathcal{G}$. The control objective of distributed  algorithms is to ensure that the states of the agents reach consensus, i.e., $x_i-x_j\rightarrow0$, $\forall i,j\in \{1,\cdots,N\} $ as $t\rightarrow \infty$.
 Throughout this paper, the following assumption holds.
\begin{assumption}
	The communication graph $\mathcal{G}$ is undirected and connected.
\end{assumption}
 
 In this paper, we consider the event-triggered mechanism. Under this mechanism, instead of  continuous  communication between agents, each agent (say agent $i$) only updates the estimate of its state ($\hat x_i$) to the real value of $x_i$ and send it to its neighbors at the triggering instants, between which the estimate is calculated locally by itself and its neighbors.    
 We set the initial time $t_0$ as the first triggering instant of each agent and define a triggering function:\begin{equation} \label{triggerfunction1}
	f_i=e_i^2-\alpha \sum_{j=1}^Na_{ij}(\hat x_i-\hat x_j)^2-\mu e^{-\nu t},
\end{equation} where $\alpha,\mu,\nu$ are positive constants and $e_i$ denotes the gap between the estimate $\hat x_i$ and the real state $x_i$, i.e., $e_i=\hat x_i-x_i$.  The next event will happen whenever the triggering condition $f_i\geq 0$ is satisfied, i.e., $t_{k+1}^i=\inf \{t|t>t_{k}^i ,f_i\geq 0\}$.
 
In this section, we consider the following distributed control law:
\begin{equation}\label{eq2}
	u_i =-\beta\sum_{j=1}^Na_{ij}(\hat x_i-\hat x_j), \quad i=1,\cdots,N.
\end{equation}
During the time interval between two triggering instants, the estimate $\hat x_i$ used by all its neighbors is held to be a constant, i.e., $\hat x_i(t)=x_i(t_{k}^i)$, $\forall t\in [t_{k}^i,t_{k+1}^i)$. This estimating mechanism is called zero-order holder (ZOH) \cite{Cortes2019survey}.

The closed-loop system derived from \dref{eq1} and \dref{eq2} is 
\begin{equation}\label{eq3}
	\dot x_i=-\beta\sum_{j=1}^Na_{ij}(\hat x_i-\hat x_j), \quad i = 1,\cdots,N.
\end{equation}
Define $z_i=\frac{1}{N}\sum_{j=1}^N(\hat x_i-\hat x_j)$. Then it follows that 
\begin{equation}\label{eq4}
\begin{aligned}
	&\dot x_i = -\beta\sum_{j=1}^Na_{ij}\left[(x_i- x_j)+(e_i- e_j)\right],\\
	& z_i=\frac{1}{N}\sum_{j=1}^N[( x_i- x_j)+( e_i- e_j)].
\end{aligned}
\end{equation}
We can further  rewrite \dref{eq4} in a compact form as
\begin{equation}\label{closeloooooooop}
	\begin{aligned}
		&\dot x=-\beta Lx-\beta L e,\\
		&z=Mx+Me,
	\end{aligned}
\end{equation}
where $M=I-\frac{1}{N}1_N1_N^T$ and $L$ is the Laplacian matrix of the graph $\mathcal{G}$.

	It is clear from the triggering mechanism and the triggering function \dref{triggerfunction1} that $e_i$ is reset to be zero at each triggering instant and increases from $0$ to some positive value during two triggering instants, and then drops again to zero at the next triggering instant. The square of the sampling error $e_i$ is bounded from above by a quadratic form of $z(t)$ and an exponential decaying term at any time instant. While this bound relationship is described by the time domain terminology,  we discover   a frequency-domain relationship between $e(s)$ and $z(s)$, as will be unveiled in the next theorem.

\begin{theorem}\label{th1}
	For the closed-loop network  \dref{closeloooooooop} with the triggering function  \dref{triggerfunction1}, it  follows that $e(s)=\Delta ( z(s))$, where $\Delta(\cdot): \mathcal{L}_2 \mapsto \mathcal{L}_2$ is a linear operator in the frequency domain  with $\|\Delta(z)\|_2\leq \sqrt{2\alpha \lambda_N}\|z\|_2+\sqrt{\frac{N\mu}{\nu}},$ i.e., $\|\Delta\|_{\infty}\leq\sqrt{2\alpha\lambda_N}$, where $\lambda_N $ denotes the largest eigenvalue of $L$.  
\end{theorem}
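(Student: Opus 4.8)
The plan is to first convert the time-domain triggering inequality into a finite-gain $\mathcal{L}_2$ bound for the time-domain map $z\mapsto e$, and then invoke Lemma \ref{l2} to transfer this map and its gain to the frequency domain, thereby producing the operator $\Delta$. The triggering rule $t_{k+1}^i=\inf\{t>t_k^i:f_i\geq 0\}$ guarantees that $f_i<0$ holds throughout every open inter-event interval, which by \dref{triggerfunction1} gives the pointwise bound $e_i^2\leq\alpha\sum_{j=1}^N a_{ij}(\hat x_i-\hat x_j)^2+\mu e^{-\nu t}$ for all $t$ and all $i$.

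Summing this inequality over $i$ and using the standard undirected-graph identity $\sum_{i=1}^N\sum_{j=1}^N a_{ij}(\hat x_i-\hat x_j)^2=2\hat x^T L\hat x$, I would obtain $\|e(t)\|^2=\sum_i e_i^2\leq 2\alpha\,\hat x^T L\hat x+N\mu e^{-\nu t}$. The next step is to replace $\hat x$ by the signal $z$. Since $z=M\hat x$ with $M=I-\frac{1}{N}1_N1_N^T$ symmetric and idempotent, and since $L1_N=0$ forces $LM=ML=L$ and hence $MLM=L$, it follows that $z^T L z=\hat x^T M L M\hat x=\hat x^T L\hat x$. Bounding this quadratic form by the largest eigenvalue, $z^T L z\leq\lambda_N\|z\|^2$, yields the clean pointwise estimate $\|e(t)\|^2\leq 2\alpha\lambda_N\|z(t)\|^2+N\mu e^{-\nu t}$.

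Integrating over $[0,\infty)$ and using $\int_0^\infty e^{-\nu t}\,dt=1/\nu$ then gives $\|e\|_2^2\leq 2\alpha\lambda_N\|z\|_2^2+N\mu/\nu$. Applying the elementary inequality $\sqrt{a+b}\leq\sqrt{a}+\sqrt{b}$ for $a,b\geq 0$ produces exactly $\|e\|_2\leq\sqrt{2\alpha\lambda_N}\,\|z\|_2+\sqrt{N\mu/\nu}$, which is the claimed finite-gain bound with operator gain $\sqrt{2\alpha\lambda_N}$ and bias $\sqrt{N\mu/\nu}$.

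It remains to justify that the time-domain map $z\mapsto e$ is linear, so that Lemma \ref{l2} applies and the frequency-domain $\Delta$ inherits both linearity and the gain. Here I would exploit the closed-loop structure: differentiating $e_i=\hat x_i-x_i$ on an inter-event interval, where $\dot{\hat x}_i=0$ under the ZOH, and using \dref{eq3} gives $\dot e=\beta L\hat x=\beta Lz$ (again using $LM=L$), while $e$ is reset to $0$ at every triggering instant. Thus, for the realized sequence of triggering times, $e$ is generated from $z$ by a linear integrate-and-reset operation, i.e. the linear time-domain operator required by Lemma \ref{l2}; since the Laplace transform is an $\mathcal{L}_2$-isometry, the gain is preserved and $\|\Delta\|_\infty\leq\sqrt{2\alpha\lambda_N}$. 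The main obstacle is precisely this linearity claim: the triggering times depend nonlinearly on the trajectory, so the operator is linear only \emph{relative to} the realized sampling sequence, and one must state carefully the sense in which $\Delta$ is ``linear'' to avoid circularity. By comparison, the quadratic-form reductions are routine once the projection identities $LM=ML=L$ are in hand.
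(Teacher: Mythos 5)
Your proposal is correct and follows essentially the same route as the paper: the same pointwise triggering bound summed over agents, the same graph identity together with the reduction $z^TLz=\hat x^TLx$ (via $MLM=L$) giving the gain $\sqrt{2\alpha\lambda_N}$ and bias $\sqrt{N\mu/\nu}$, and the same integrate-and-reset representation of $e$ in terms of $z$ (the paper writes $e_i(t)=\beta L_i^T\int_{t_k^i}^t z(\tau)\,d\tau$ directly rather than through $\dot e=\beta Lz$ with resets). Your closing caveat---that the map is linear only relative to the realized triggering sequence, since the triggering instants themselves depend on the trajectory---is a genuine subtlety that the paper's proof also passes over without comment.
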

\begin{proof}
	From \dref{eq3} and the ZOH  mechanism of $\hat x_i$, it is not difficult to obtain that 
	$$
	\begin{aligned}
	 e_i(t)&=\hat x_i(t)-x_i(t)\\
	 	 & =\beta\sum_{j=1}^Na_{ij}\int_{t_{k}^i}^t(\hat x_i(\tau)-\hat x_j(\tau))d\tau\\
	 & = \beta L_i^T\int_{t_{k}^i}^t\hat x(\tau)d\tau \\
	 & = \beta L_i^TM\int_{t_{k}^i}^t\hat x(\tau)d\tau\\
	 & = \beta L_i^T\int_{t_{k}^i}^t 1(t-\tau)z(\tau)d\tau,\\
	\end{aligned}
		$$
	where $L_i^T$ denotes the $i$-th row of the Laplacian matrix and $1(t)=1$ when $t\geq0$ and $1(t)=0$ when $t<0$.
    Therefore, $e_i(t)=\phi_i(z)$ where $\phi_i(\cdot):\mathcal{L}_2\mapsto \mathcal{L}_2$ is a linear  operator in the time domain. 
	In light of  Lemma \ref{l2}, we have $e_i(s)=\Delta_i(z(s))$, where $\Delta_i(\cdot):\mathcal{L}_2\mapsto \mathcal{L}_2$ denotes a linear operator in the frequency domain and thus we have $e(s)=\Delta(z(s))$.

Note that $$\begin{aligned}
\|e\|_2^2&= \sum_{i=1}^N \|e_i\|^2_2
%& = \sum_{i=1}^N\frac{1}{2\pi}\int_{-\infty}^{\infty}e^*_i(j\omega)e_i(j\omega)d\omega \\
 = \sum_{i=1}^N\int_0^{\infty}e^*_i(t)e_i(t)dt.\\
\end{aligned}$$
According to  the triggering condition \dref{triggerfunction1}, for any time instant $t$, we have 
$$e_i^2\leq \alpha \sum_{j=1}^Na_{ij}(\hat x_i-\hat x_j)^2+\mu e^{-\nu t}.
$$
Therefore, $$\begin{aligned}
	 	\|e\|_{2}^2& \leq\sum_{i=1}^N\int_{0}^{\infty}\left(\alpha\sum_{j=1}^Na_{ij}(\hat x_i-\hat x_j)^2+\mu e^{-\nu t} \right)dt \\
	 	& =\int_{0}^{\infty}\left(\alpha\sum_{i=1}^N\sum_{j=1}^Na_{ij}(\hat x_i-\hat x_j)^2+N\mu e^{-\nu t} \right)dt \\
	 	& = \int_{0}^{\infty} \left(2 \alpha z^T L z +N\mu e^{-\nu t}\right) dt \\
	 	& \leq%\alpha\lambda_N\int_{0}^{\infty}z(t)^Tz(t) dt+\frac{N\mu}{\nu} \\
	 	  2\alpha\lambda_N\|z\|_2^2+\frac{N\mu}{\nu}\\
	 	& \leq \left(\sqrt{2\alpha\lambda_N}\|z\|_2+\sqrt{\frac{N\mu}{\nu}}\right)^2.
	 \end{aligned}$$
 This completes the proof.
\end{proof}
\begin{remark}
The importance of this theorem lies in the following aspects. Firstly, it illustrates that the sampling errors can be seen as the images of linear finite-gain operators in the frequency domain acting on the consensus errors of the sampled states. This paves the way  to examine the robustness of the event-triggered algorithm to the frequency-domain uncertainties. These operators, generally speaking, are not rational transfer matrices in $\mathcal{RH}_{\infty}$  in the classic robust control. Nevertheless,  these operators are finite-gain $\mathcal L_2$ stable. 
Secondly, it uncovers  a quantitative relationship among the sampling parameter $\alpha$, the largest eigenvalue  of the Laplacian matrix, and the  $\mathcal L_2$ gain of the operators which quantifies  the effect of  aperiodic event-triggering. 
  More specifically,  the larger   $\lambda_N$  and  $\alpha$ are,  the larger the $\mathcal L_2 $ gain of the operators  will be.
\end{remark}
\begin{figure}
\centerline{\includegraphics[width=0.9\columnwidth,height=0.8\columnwidth]{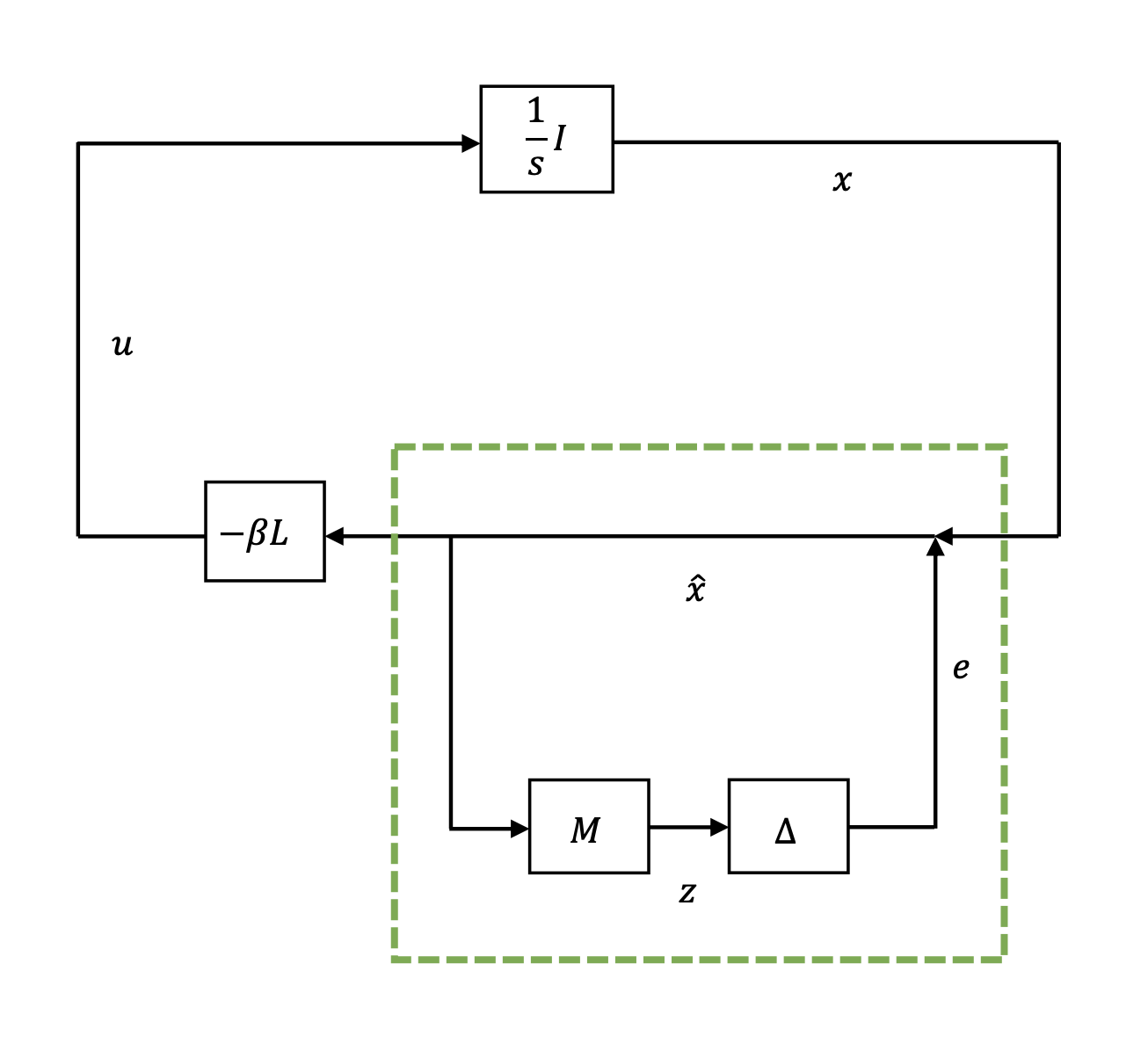}}
\caption{An operator theoretic reformulation of the distributed event-triggered consensus algorithm.}
\label{feedback}
\end{figure}

\begin{remark}
	The event-triggered consensus algorithm has an equivalent  block structure  shown in Fig. \ref{feedback}. Note that the effect of the event-triggered mechanism is actually equivalent to introducing the virtual additional feedback loop in the dotted green block. The  transfer function from $x$ to $\hat x$ is $R=(I-\Delta M)^{-1}$, i.e., $\hat x=(I-\Delta M)^{-1} x$.    When there is no event triggering, the operator $\Delta$ is equal to zero and thus  $R=I$. The event-triggered consensus algorithm \dref{eq3} then reduces to the classical one with continuous communication as in \cite{Olfati2004consensus}. 
\end{remark}

In the next theorem, a frequency-domain robust control  framework will be utilized 
to find the condition under which the network system reaches consensus by the event-triggered protocol \dref{eq2}. 
 
\begin{theorem}\label{th2}
	The network system \dref{eq1} reaches consensus under event-triggered control law \dref{eq2}, if the parameter $\alpha $ in the triggering function $f_i$ are selected to satisfy $2\alpha \lambda_N<1$. Moreover, the closed-loop system does not exhibit the  Zeno behavior as long as $\mu>0 $ and $\nu>0$. 
\end{theorem}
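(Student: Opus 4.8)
The plan is to treat the two assertions separately: the consensus condition $2\alpha\lambda_N<1$ via the small gain theorem applied to the operator-theoretic reformulation of Theorem~\ref{th1}, and the absence of Zeno behavior via a direct lower bound on the inter-event times. For consensus, I would first make the feedback loop of Fig.~\ref{feedback} explicit by computing the forward operator from the sampling error $e$ to the estimate disagreement $z$. Laplace-transforming \dref{closeloooooooop} gives $z(s)=G(s)e(s)$ with $G(s)=M-\beta M(sI+\beta L)^{-1}L$, so that the event-triggered system is exactly the negative interconnection of $G$ and the finite-gain operator $\Delta$ of Theorem~\ref{th1}.

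Because $L$ is symmetric with $L1_N=0$, the matrices $M$ and $L$ are simultaneously diagonalizable; in the eigenbasis of $L$ the operator $G(s)$ vanishes on $\mathrm{span}\{1_N\}$ and equals $\frac{s}{s+\beta\lambda_k}$ on the eigenspace of each $\lambda_k$, $k\ge2$. Since $|j\omega/(j\omega+\beta\lambda_k)|<1$ for all finite $\omega$ and tends to $1$ as $\omega\to\infty$, one obtains $\|G\|_\infty=1$. Applying Lemma~\ref{smallgain} with $\gamma_1=\|G\|_\infty=1$ and $\gamma_2=\|\Delta\|_\infty\le\sqrt{2\alpha\lambda_N}$, the loop is internally $\mathcal{L}_2$ stable precisely when $\sqrt{2\alpha\lambda_N}<1$, i.e. when $2\alpha\lambda_N<1$, which is the advertised condition. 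This forces $z\in\mathcal{L}_2$, hence $e=\Delta(z)\in\mathcal{L}_2$ and $Mx=z-Me\in\mathcal{L}_2$. Noting that $\frac{1}{N}1_N^T\dot x=0$ conserves the average, I would finish by checking that $\frac{d}{dt}(Mx)=-\beta L(x+e)$ remains bounded and invoking Barbalat's lemma to upgrade $Mx\in\mathcal{L}_2$ to $Mx\to0$, i.e. $x_i-x_j\to0$.

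For Zeno-freeness, fix an arbitrary finite horizon $T$; on $[0,T]$ the trajectory is bounded, so $|\dot e_i|=\beta|\sum_j a_{ij}(\hat x_i-\hat x_j)|\le C_T$ for some constant. Since $e_i$ is reset to $0$ at each trigger $t_k^i$, we have $|e_i(t)|\le C_T(t-t_k^i)$, while the triggering condition $f_i\ge0$ together with $\alpha\sum_j a_{ij}(\hat x_i-\hat x_j)^2\ge0$ forces $e_i^2(t_{k+1}^i)\ge\mu e^{-\nu t_{k+1}^i}$. Combining the two gives $t_{k+1}^i-t_k^i\ge\frac{\sqrt\mu}{C_T}e^{-\nu t_{k+1}^i/2}\ge\frac{\sqrt\mu}{C_T}e^{-\nu T/2}>0$ whenever $\mu,\nu>0$. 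A finite Zeno accumulation point $T$ would force the inter-event times near $T$ to vanish, contradicting this strictly positive lower bound, so no Zeno behavior occurs.

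The main obstacle I anticipate is twofold: establishing $\|G\|_\infty=1$ cleanly—noting that the gain is attained only in the high-frequency limit, so that all of the small gain margin must come from $\sqrt{2\alpha\lambda_N}<1$—and upgrading the $\mathcal{L}_2$ conclusion of the small gain theorem to genuine asymptotic agreement, which requires the boundedness and uniform continuity needed for Barbalat's lemma together with careful treatment of the bias term $\sqrt{N\mu/\nu}$ carried by $\Delta$.
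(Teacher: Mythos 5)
Your proposal is correct and follows essentially the same route as the paper: the small gain theorem applied to the loop between the sampling operator of Theorem~\ref{th1} (gain $\sqrt{2\alpha\lambda_N}$) and the forward transfer function, whose $\mathcal H_\infty$ norm equals $1$ via the spectral decomposition into the modes $\frac{s}{s+\beta\lambda_k}$, together with the same finite-horizon lower bound $\frac{\sqrt{\mu}\,e^{-\nu T/2}}{C_T}$ on inter-event times to exclude Zeno behavior. The only differences are presentational: the paper first projects onto the disagreement subspace via $Y$ and applies small gain to the reduced loop $T_{\bar e\bar z}=s(sI+\beta\bar\Lambda)^{-1}$ with $\bar\Delta=Y^T\Delta(Y\cdot)$, whereas you keep the full-space operator $G(s)=M-\beta M(sI+\beta L)^{-1}L$ (which vanishes on $\mathrm{span}\{1_N\}$) and add an explicit Barbalat-type step to upgrade the $\mathcal L_2$ conclusion to asymptotic consensus, a step the paper leaves implicit.
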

\begin{proof}
	Letting $U$ be the unitary matrix such that $U^TLU=\Lambda=\text{diag}\{0, \lambda_2, \cdots,\lambda_N\}=\text{diag}\{0,\bar\Lambda\}$, and denoting $\tilde x=U^Tx$, $\tilde z= U^Tz$, $\tilde e=U^Te$,   it is easy to find that  consensus is reached if and only if $\tilde x_i$ is asymptotically stable $\forall i=2,\cdots,N$. 
	Setting $\bar x=\tilde x_{2:N}$, $\bar z=\tilde z_{2:N}$, $\bar e=\tilde e_{2:N}$, where $(\cdot)_{2:N}$ denotes the subvector that takes the second to the $N$-th elements of the original vector. {   We can then derive}  from \dref{eq4} that
	\begin{equation}\label{eq6}
		\begin{aligned}
			& \dot {\bar x}=-\beta \bar\Lambda\bar x-\beta \bar\Lambda\bar e,\\
			&  \bar z= \bar x+\bar e.\\
		\end{aligned}
	\end{equation}
	Note also that $U$ can be written as $\begin{bmatrix}
		\frac{1_N}{\sqrt{N}} & Y 
	\end{bmatrix}$ with $Y^TY=I_{N-1}$ and $YY^T=M$. 
	It is not difficult to see that 
	$$\begin{aligned}
\tilde e&=U^T\Delta(UU^Tz)\\
&= \begin{bmatrix}
	\frac{1^T}{\sqrt{N}}\\
	Y^T
\end{bmatrix}\Delta(\frac{1}{\sqrt{N}}\tilde z_1+Y\bar z)\\
& = \begin{bmatrix}
	\frac{1^T}{\sqrt{N}}\\
	Y^T
\end{bmatrix}\Delta(Y\bar z)	,
\end{aligned}
$$  where we use $\tilde z_1\equiv 0$ to get the last equality.	
Therefore, we have \begin{equation}\label{eq7}
	\bar e=Y^T\Delta(Y\bar z)=\bar \Delta(\bar z),
\end{equation}
where $\bar\Delta$ is a linear  operator, and it is easy to verify that $$\begin{aligned}
	\|\bar e\|_2^2&=\int_0^{\infty}\Delta^*(Y\bar z)M\Delta(Y\bar z)dt\\
	&\leq\|\Delta(Y\bar z)\|_2^2\\
	& \leq\left( \sqrt{2\alpha\Lambda_N}\|Y\bar z\|_2+\sqrt{\frac{N\mu}{\nu}}\right)^2\\
	& =\left( \sqrt{2\alpha\lambda_N}\|\bar z\|_2+\sqrt{\frac{N\mu}{\nu}}\right)^2.
\end{aligned}$$
Therefore,
 $$\|\bar\Delta(\bar z)\|_2\leq\sqrt{2\alpha\lambda_N}\|\bar z\|_2+\sqrt{\frac{N\mu}{\nu}},$$	
which is equivalent to saying that $\|\bar\Delta\|_{\infty}\leq \sqrt{2\alpha\lambda_N}$.	
In light of Lemma \ref{smallgain},  the system is internally stable if $\|T_{\bar e\bar z}\|_{\infty}\|\bar\Delta\|_{\infty}<1$, where $T_{\bar e\bar z}$ denotes the transfer matrix from $\bar e$ to $\bar z$, calculated by
	$$T_{\bar e\bar z}=-(sI+\beta\bar\Lambda)^{-1}\beta\bar\Lambda+I=(sI+\beta\bar\Lambda)^{-1}sI.$$ Note that
	 $$\|T_{\bar e\bar z}\|_{\infty}=\max_{i=2,\cdots,N}\left\|\frac{s}{s+\beta\lambda_i}\right\|_{\infty}=1.$$
	Therefore, the system \dref{eq6} is internally stable, if $\|\bar \Delta\|_{\infty}<1$, which is satisfied if $2\alpha \lambda_N<1$. 
	
Next, we exclude the Zeno behavior. Notice that  during each time interval between any two consecutive triggering instants, i.e., $[t_{k}^i,t_{k+1}^i)$, $\dot e_i=-\dot x_i=\beta\sum_{j=1}^Na_{ij}(\hat x_i-\hat x_j)$ is bounded by a positive  real number, say $H$. Suppose that there exists Zeno behavior. Then there exists an agent $i$, such that $\lim_{k\rightarrow\infty}t_{k}^i=T<\infty$. Thus, for a small positive number $\delta<\frac{\sqrt{\mu }e^{-\frac{\nu T}{2}}}{H}$, there exists  a positive integer $K$ such that for $\forall k\geq K$, $t_{k}^i\in (T-\delta,T]$.  Notice that at the triggering instant $t_K^i$, $e_i=0$. And the next triggering time is the first time when $e_i^2$ reaches $\alpha\sum_{j=1}^Na_{ij}(\hat x_i-\hat x_j)^2+\mu e^{-\nu t}$. Then there must exist some time instant $T_0 $ when $e_i^2(T_0)=\mu e^{-\nu T_0}$. Since $$e_i^2(T_0)=(x_i(t_K^i)-x_i(T_0))^2=\left(\int_{t_K^i}^{T_0}\beta\sum_{j=1}^N(\hat x_i-\hat x_j)dt\right)^2,$$  
we have $$e_i^2(T_0)\leq H^2(T_0-t_K^i)^2.
$$ 
On the other hand, $\mu e^{-\nu T_0}\geq\mu e^{-\nu T}.  $  Thus we have $$H^2(T_0-t_K^i)^2\geq\mu e^{-\nu T},$$ which implies
$$
t_{K+1}^i-t_K^i\geq T_0-t_K^i\geq\frac{\sqrt{\mu }e^{-\frac{\nu T}{2}}}{H}.
$$
Since $\delta<\frac{\sqrt{\mu }e^{-\frac{\nu T}{2}}}{H}$, it follows that the $K+1$-th triggering instant $t_{K+1}>t_K^i+\delta>T$, which leads to a contradiction.
\end{proof}
\begin{remark}
	This theorem unveils some essential requirements to achieve consensus under event-triggered protocol \dref{eq2} that the gain (norm) of the operator should not be too large. The operator gain characterizes the extent of the sampling error introduced by the aperiodic event triggering. The larger the operator gain is, the larger the $\mathcal L_2$ norm of the sampling errors will be. 
From the quantitative relationship shown in this theorem, if  $\lambda_N$ is larger, then it is more reliable to trigger more frequently in the sense that $\alpha$  should be smaller.
\end{remark}
\begin{remark}
	 Different from most of the previous works, e.g.,\cite{cortes2016distributed}, \cite{dimarogonas2012distributed},\cite{Garcia2013decentralized}, \cite{yi2017distributed}, \cite{berneburg2019ditributed}, where time-domain Lyapunov stability analysis is used,  in this section an operator is constructed to characterize the relationship between the sampling error $e$ and the variable $z$. A robust control method based on the small gain theorem is utilized to get the consensus condition. Interestingly, since $\lambda_N\leq2d_i$, the consensus condition in Theorem \ref{th2} is less conservative than that of \cite{cortes2016distributed}. More importantly, this method provides a feasible way to handle the robustness of event-triggered control when the network systems are subject to frequency-domain uncertainties, as will be shown in the next section.            
	\end{remark}
	
\section{Robust Consensus Control via Event-Triggered Protocols}\label{s4}
In the last section, we analyze the event-triggered consensus problem of the integrator network without uncertainties in a frequency domain approach. One of the major merits of this approach is that it can handle various kinds of frequency-domain uncertainties in a unified framework. In this section we take additive dynamic uncertainties and network topology uncertainties as two  illustrating examples.             
\subsection{Additive Dynamic Uncertainties}
In this subsection, we consider the event-triggered consensus problem for the integrator network subject to additive dynamic uncertainties. The robust synchronization of linear multi-agent systems under such additive dynamic uncertainties with continuous communications was previously considered in  \cite{trentelman2013robust}.
Instead of measuring and exchanging  the state information directly at the triggering instants, each agent can only fetch an output variable $y_i$ that consists of the state variable $x_i$ and the disturbance signal $d_i$ caused by the dynamic perturbation $\Delta^a_i$. The agent dynamics are described by { 
\begin{equation}\label{eq8}
	\begin{aligned}
	&\dot x_i= u_i,\\
	& y_i= x_i+d_i,\\
	& d_i(s)=\Delta^a_i (u_i(s)),\quad i=1,\cdots,N\\
	\end{aligned}	
\end{equation}
where $\Delta_i^a$ is a linear} { finite-gain $\mathcal L_2$ stable operator with operator gain $\|\Delta^a_i\|_{\infty}\leq \eta$. Note that this definition include transfer matrices in $\mathcal{RH}_{\infty}$ as special cases.} 

We consider the following  distributed output feedback protocol:
\begin{equation}\label{eq9}
	u_i=-\beta \sum_{j=1}^Na_{ij}(\hat y_i-\hat y_j), \quad i=1,\cdots,N,
\end{equation} 
where $\hat y_i$ denotes the estimate of the output $y_i$, which is updated  to the real value $y_i(t_{k}^i)$ and broadcasted to all its neighbors at the $k$-th triggering instant of the agent $i$, i.e., $t_{k}^i$ and keeps constant (ZOH) during two triggering instants.
Define $\epsilon_i=\hat y_i-y_i$. Moreover, the triggering function of agent $i$ is set to be 
\begin{equation}\label{triggerfunction2}
	f_i=\epsilon_i^2-\alpha\sum_{j=1}^Na_{ij}(\hat y_i-\hat y_j)^2-\mu e^{-\nu t}.
\end{equation}

According to \dref{eq8} and \dref{eq9}, we have
\begin{equation}\label{eq10}
	\dot x_i= -\beta\sum_{j=1}^Na_{ij}\left[(x_i-x_j)+(d_i-d_j)+(\epsilon_i-\epsilon_j)\right].
\end{equation}
Denote $w_i=\frac{1}{N}\sum_{j=1}^N(\hat y_i-\hat y_j)$. We can obtain the closed-loop network dynamics as follows:
\begin{equation}\label{eq12}
	\begin{aligned}
		& \dot x = -\beta L x-\beta L d-\beta L \epsilon,\\
		& u = -\beta L x-\beta L d-\beta L \epsilon,\\
		& w = M x+M d+M \epsilon .\\
	\end{aligned}
\end{equation}

\begin{theorem}\label{th4}
	For the  network \dref{eq8} under the control law  \dref{eq9} with the triggering  function \dref{triggerfunction2}, it follows that $\epsilon(s)=\Delta^b(w(s))$, where $\Delta^b(\cdot):\mathcal{L}_2\mapsto\mathcal{L}_2$ is a linear operator with the operator gain $\|\Delta^b\|_{\infty}\leq\sqrt{2\alpha \lambda_N}$.
\end{theorem}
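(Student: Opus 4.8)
The plan is to follow the template of Theorem \ref{th1} verbatim, now with the triplet $(\epsilon,\hat y,w)$ playing the role of $(e,\hat x,z)$. First I would establish a time-domain operator representation of the sampling error. Using the ZOH property $\hat y_i(t)=y_i(t_k^i)$ on $[t_k^i,t_{k+1}^i)$ together with $\epsilon_i=\hat y_i-y_i$, I would write $\epsilon_i(t)=y_i(t_k^i)-y_i(t)=-\int_{t_k^i}^t \dot y_i(\tau)\,d\tau$, and then substitute $\dot y_i=\dot x_i+\dot d_i$ with $\dot x_i=u_i=-\beta(L\hat y)_i$ from \dref{eq8} and \dref{eq9}. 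Since the rows of $L$ sum to zero we have $L_i^T=L_i^T M$, and with $w=M\hat y$ this gives $\epsilon_i(t)=\beta L_i^T\int_{t_k^i}^t w(\tau)\,d\tau+\bigl(d_i(t_k^i)-d_i(t)\bigr)$.

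The first summand is manifestly a linear operator acting on $w$ (a held integral, exactly as in Theorem \ref{th1}). The second summand is the ZOH sampling error of the disturbance $d_i$, and this is the step requiring care, since a priori it makes $\epsilon$ depend on $d$ as well as on $w$. The resolution is to observe that $u=-\beta L\hat y=-\beta L M\hat y=-\beta L w$, so the control input is itself a linear image of $w$; consequently $d=\Delta^a(u)=\Delta^a(-\beta L w)$ is a composition of linear operators, hence a linear operator of $w$, and so is its sampling error $d_i(t_k^i)-d_i(t)$. Therefore $\epsilon_i$ is a linear time-domain operator of $w$, and Lemma \ref{l2} upgrades this to a linear frequency-domain operator, yielding $\epsilon(s)=\Delta^b(w(s))$.

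For the gain bound I would argue exactly as in Theorem \ref{th1}. The triggering function \dref{triggerfunction2} furnishes the pointwise estimate $\epsilon_i^2\leq\alpha\sum_{j=1}^N a_{ij}(\hat y_i-\hat y_j)^2+\mu e^{-\nu t}$; summing over $i$, integrating over $[0,\infty)$, and using $\sum_{i,j}a_{ij}(\hat y_i-\hat y_j)^2=2\hat y^T L\hat y=2w^T L w$ (again via $LM=L$), together with $w^T L w\leq\lambda_N w^T w$ and $\int_0^\infty N\mu e^{-\nu t}\,dt=\frac{N\mu}{\nu}$, gives $\|\epsilon\|_2^2\leq 2\alpha\lambda_N\|w\|_2^2+\frac{N\mu}{\nu}\leq\bigl(\sqrt{2\alpha\lambda_N}\,\|w\|_2+\sqrt{N\mu/\nu}\bigr)^2$, i.e. $\|\Delta^b\|_\infty\leq\sqrt{2\alpha\lambda_N}$.

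The main obstacle is precisely the disturbance term $d_i(t_k^i)-d_i(t)$ in the operator representation; the rest is a direct transcription of Theorem \ref{th1}, because the gain computation never sees $d$ (it only uses the pointwise triggering inequality). The key observation that dissolves the obstacle is that, since the controller acts through the sampled outputs, $u$ and hence $d$ are linear images of $w$, so the apparent extra dependence of $\epsilon$ on $d$ is in fact a hidden dependence on $w$, and the linearity claimed for $\Delta^b$ is preserved.
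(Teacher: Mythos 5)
Your proposal is correct and follows essentially the same route as the paper's own proof: decompose $\epsilon_i$ into the held-integral term $\beta L_i^T\int_{t_k^i}^t w(\tau)\,d\tau$ plus the disturbance sampling error $d_i(t_k^i)-d_i(t)$, and resolve the latter's apparent dependence on $d$ by noting $u=-\beta Lw$, so that $d=\Delta^a(-\beta Lw)$ is itself a linear image of $w$ --- exactly the paper's chain $\phi_i(u_i(t))=\phi_i(-\beta L_i^Tw(t))=\phi_i'(w(t))$, followed by Lemma \ref{l2}. Your explicit gain computation via the triggering inequality and $\hat y^TL\hat y=w^TLw$ is precisely the step the paper declares ``similarly determined as in Theorem \ref{th1}'' and omits.
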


\begin{proof}
	From the definition of $\epsilon_i$ we know that  $\epsilon_i(t)=\hat y_i-y_i=y(t_{k}^i)-y_i(t)=x_i(t_{k}^i)+d_i(t_{k}^i)-x_i(t)-d_i(t)$.
	According to \dref{eq8} and \dref{eq9}, when $t\in [t_{k}^i,t_{k+1}^i)$, 
	$$\begin{aligned}
		x_i(t)-x_i(t_{k_i}^i)&=-\beta \int_{t_{k_i}^i}^t\sum_{j=1}^Na_{ij}(\hat y_i(\tau)-\hat y_j(\tau))d\tau\\&=-\beta\int_{t_{k}^i}^tL_i^T\hat y(\tau)d\tau\\&=-\beta L_i^T\int_{t_{k}^i}^tw(\tau)d\tau.
	\end{aligned}$$
	Thus, we have $x_i(t)-x_i(t_{k_i}^i)=\psi_i(w(t))$, 
	 where $\psi_i(\cdot):\mathcal{L}_2\mapsto\mathcal{L}_2$ is a linear  operator.
	{ Since $d_i(s)=\Delta_i^a(u_i(s))$, we let $d_i(t)=\tilde{\Delta}_i^a(u_i(t))$ and   $d(t_k^i)=\bar\Delta_i^a(u_i(t))$, and we have that $\tilde{\Delta}_i^a(\cdot)$ and $\bar\Delta_i^a(\cdot)$ are linear operators in the time domain. }
	%It then follows that  $$\begin{aligned}
	%	d_i(t_{k}^i)-d_i(t)=&\int_0^{t_{k}^i}\Delta^a_i(t_{k}^i-\tau)u_i(\tau)d\tau\\
	%	&-\int_0^{t}\Delta^a_i(t-\tau)u_i(\tau)d\tau\\
	%	=&\int_0^{t_{k}^i}(\Delta^a_i(t_{k}^i-\tau)-\Delta^a_i(t-\tau))u_i(\tau)d\tau\\
	%	& -\int_{t_{k}^i}^t\Delta^a_i(t-\tau)u_i(\tau)d\tau.
	%\end{aligned}
	%$$
	It is then not difficult to find that $d_i(t_{k_i}^i)-d_i(t)$ is a linear operator of $u_i(t)$. 
	
	Denote $d_i(t_{k}^i)-d_i(t)=\phi_i(u_i(t))$ and note that $u_i(t)=-\beta\sum_{j=1}^Na_{ij}(\hat y_i(t)-\hat y_j(t))=-\beta L_i^T\hat y(t)=-\beta L_i^Tw(t)$.
	Thus $$d_i(t_{k}^i)-d_i(t)=\phi_i(-\beta L_i^Tw(t))=\phi_i'(w(t)),$$ 
	and it is evident that $\phi_i'(\cdot):\mathcal{L}_2\mapsto\mathcal{L}_2$ is a linear  operator. 
	Therefore, $\epsilon_i=-\psi_i(w(t))+\phi'_i(w(t))=\Psi_i(w(t)),$ where $\Psi_i'(\cdot):\mathcal{L}_2\mapsto\mathcal{L}_2$ is a linear  operator. According to Lemma \ref{l2}, $\epsilon_i(s)=\Delta^b_i(w(s))$, where $\Delta_i^b(\cdot):\mathcal{L}_2\mapsto\mathcal{L}_2$ is a linear operator in the frequency domain. Moreover, $\epsilon=\Delta^b(w(s))$, where $\Delta :\mathcal{L}_2\mapsto\mathcal{L}_2$ is a linear operator. The operator gain $\|\Delta^b\|_{\infty}$ can be similarly determined as in  Theorem \ref{th1} and  is omitted here for brevity. 
\end{proof}

\begin{figure}[!t]
\centerline{\includegraphics[width=0.9\columnwidth,height=0.8\columnwidth]{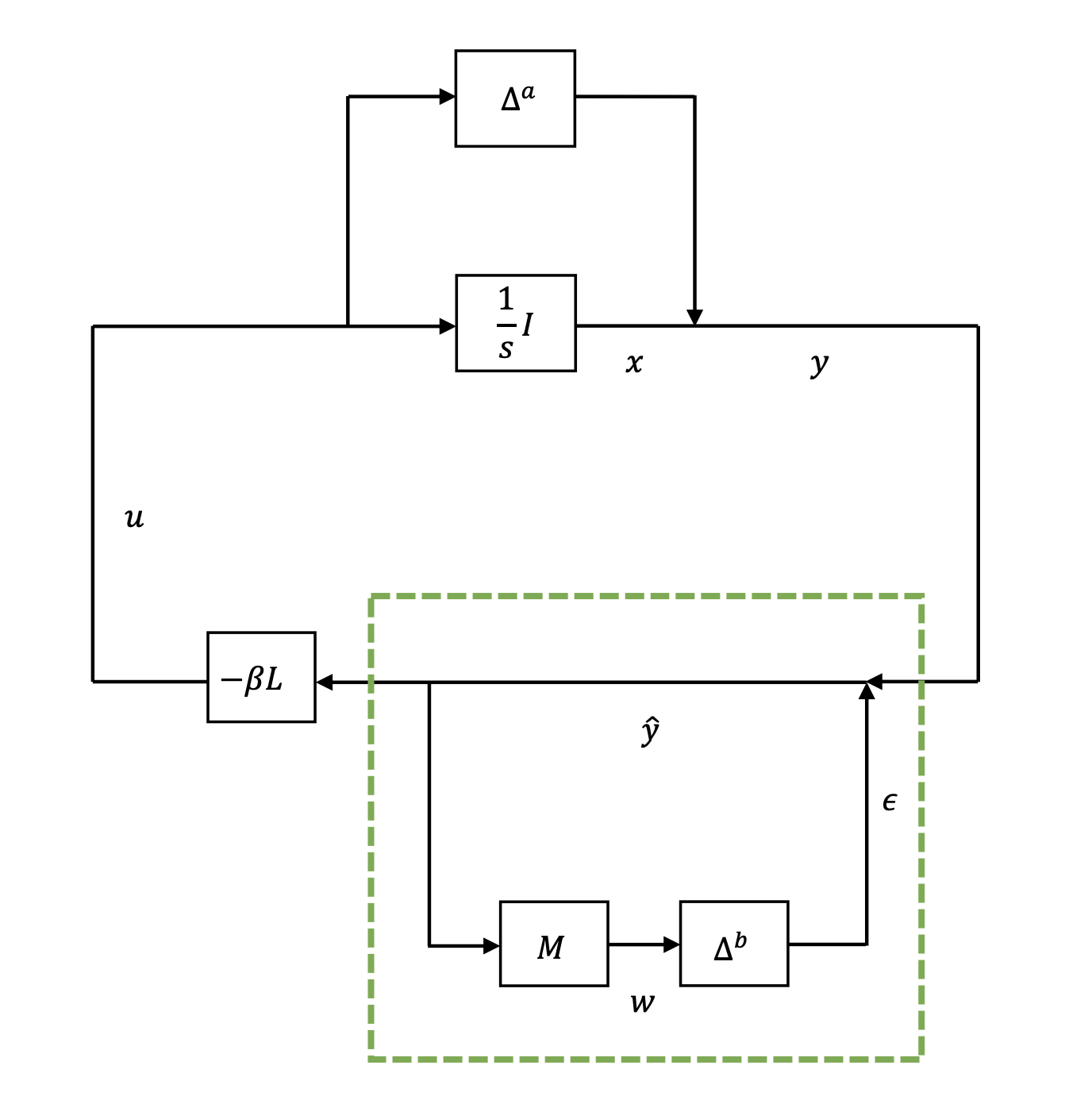}}
\caption{Event-triggered network system with additive dynamic uncertainties.}
\label{robust}
\end{figure}
\begin{remark}
 	It is worth noting that  two blocks of operators $\Delta^a $ and $\Delta^b$ appearing  in Fig. \ref{robust} are essentially different. The first block of operator represents the uncertainties of the agent dynamics in $\mathcal{RH}_{\infty}$, possibly caused by model uncertainties, unmodeled dynamics, or nonlinear behavior of the agent itself. The second block of operator $\Delta^b$ is caused by the event-triggered sampling mechanism and in general does not belong to $\mathcal{RH}_{\infty}$.  Quite interestingly, these two blocks of operators generated by totally different mechanisms  can be unified in an operator-theoretic framework, since the two blocks of uncertainties are both linear finite-gain $\mathcal L_2$ stable operators in the frequency domain.   
\end{remark}

As will shown in  the the next theorem, the two blocks of operators in the Fig. \ref{robust} cannot be too large in order to guarantee the robust consensus of the network system. 

\begin{theorem}\label{th44}
	Let $\gamma=\max \{\eta,\sqrt{2\alpha \lambda_N}\}$. The uncertain network \dref{eq8} reaches robust  consensus under event-triggered  control law \dref{eq9}  with the  triggering function \dref{triggerfunction2}, if  the positive constants $\beta$, $\alpha$, $\mu$, and $\nu$  satisfy that $(\beta\lambda_N+1)\gamma<1$. Moreover, the closed-loop system does not exhibit the Zeno behavior. 
\end{theorem}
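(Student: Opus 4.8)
The plan is to mirror the operator-theoretic, small-gain-style argument of Theorem~\ref{th2}, but now with \emph{two} uncertainty channels---the additive dynamics $\Delta^a$ and the event-triggering operator $\Delta^b$---treated jointly as a single block-diagonal structured uncertainty, so that the robustness condition follows from the structured-singular-value bound of Lemma~\ref{l4} together with Lemma~\ref{l3}. First I would diagonalize the Laplacian exactly as before: with $U=\begin{bmatrix}\tfrac{1_N}{\sqrt N} & Y\end{bmatrix}$, $U^TLU=\mathrm{diag}\{0,\bar\Lambda\}$, $Y^TY=I_{N-1}$, $YY^T=M$, I project \dref{eq12} onto the disagreement subspace by setting $\bar x=Y^Tx$, $\bar d=Y^Td$, $\bar\epsilon=Y^T\epsilon$, $\bar u=Y^Tu$, $\bar w=Y^Tw$. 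Using $LY=Y\bar\Lambda$ and $MY=Y$, the reduced dynamics become $\dot{\bar x}=-\beta\bar\Lambda\bar x-\beta\bar\Lambda(\bar d+\bar\epsilon)$, $\bar u=-\beta\bar\Lambda\bar x-\beta\bar\Lambda(\bar d+\bar\epsilon)$, $\bar w=\bar x+\bar d+\bar\epsilon$, and consensus is equivalent to asymptotic stability of this $(N-1)$-dimensional loop.

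The first delicate point is that both uncertainties retain their gain bounds after this reduction. Since $u$ lies in $\mathrm{range}(L)=\mathrm{range}(Y)$ we have $u=Y\bar u$ and $\|u\|_2=\|\bar u\|_2$; hence $\|\bar d\|_2=\|Y^T\Delta^a(u)\|_2\le\|\Delta^a(u)\|_2\le\eta\|u\|_2=\eta\|\bar u\|_2$, so the reduced operator $\bar\Delta^a:\bar u\mapsto\bar d$ satisfies $\|\bar\Delta^a\|_\infty\le\eta$. The bound $\|\bar\Delta^b\|_\infty\le\sqrt{2\alpha\lambda_N}$ is obtained exactly as in \dref{eq7} of Theorem~\ref{th2}. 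Collecting the two channels into $\mathbf\Delta=\mathrm{diag}\{\bar\Delta^a,\bar\Delta^b\}$, every admissible uncertainty has gain at most $\gamma=\max\{\eta,\sqrt{2\alpha\lambda_N}\}$.

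Next I would extract the generalized plant $G$ mapping the uncertainty outputs $(\bar d,\bar\epsilon)$ to the uncertainty inputs $(\bar u,\bar w)$. Eliminating $\bar x$ in the Laplace domain gives the diagonal blocks $G_{11}=G_{12}=-s\beta\bar\Lambda(sI+\beta\bar\Lambda)^{-1}$ and $G_{21}=G_{22}=s(sI+\beta\bar\Lambda)^{-1}$, whence, entrywise over $i=2,\dots,N$, $\|G_{11}\|_\infty=\|G_{12}\|_\infty=\max_i\beta\lambda_i=\beta\lambda_N$ and $\|G_{21}\|_\infty=\|G_{22}\|_\infty=\max_i\bigl\|\tfrac{s}{s+\beta\lambda_i}\bigr\|_\infty=1$. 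Feeding these into the two-block estimate of Lemma~\ref{l4} and using that each summand is dominated by its supremum gives $\sup_{\omega}\mu_{\mathbf\Delta}(G(j\omega))\le\sqrt{\|G_{11}\|_\infty^2+\|G_{22}\|_\infty^2+2\|G_{12}\|_\infty\|G_{21}\|_\infty}=\sqrt{(\beta\lambda_N+1)^2}=\beta\lambda_N+1$. The crucial feature is that the bound collapses to a perfect square; applying Lemma~\ref{l3} then shows the loop is well-posed and internally stable for all $\mathbf\Delta$ of gain $\le\gamma$ precisely when $\beta\lambda_N+1<1/\gamma$, i.e. $(\beta\lambda_N+1)\gamma<1$, which is the claimed condition. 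I expect this $\mu$-bound step, together with verifying the gain preservation under the $Y$-reduction, to be the main obstacle, since one must check that the common bound $\gamma$ legitimately governs both structurally different blocks and that the off-diagonal cross terms conspire to produce the exact square.

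Finally, Zeno-freeness follows by the argument of Theorem~\ref{th2}: between consecutive triggers $\hat y_i$ is held, so $|\dot\epsilon_i|=|\dot y_i|=|u_i+\dot d_i|$, which is bounded by some $H$ once internal stability guarantees that all signals (and, through $\Delta^a$, $d_i$ and its derivative) are bounded. Then $\epsilon_i^2(t)\le H^2(t-t_k^i)^2$ must reach the strictly positive threshold $\mu e^{-\nu t}$, forcing an inter-event time of at least $\sqrt{\mu}\,e^{-\nu T/2}/H>0$ near any hypothetical accumulation point $T$, a contradiction.
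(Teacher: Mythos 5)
Your proposal is correct and follows essentially the same route as the paper's proof: project onto the disagreement subspace, collect $\bar\Delta^a$ and $\bar\Delta^b$ into one block-diagonal structured uncertainty of gain at most $\gamma$, compute the same plant blocks $G_{11}=G_{12}=-s\beta\bar\Lambda(sI+\beta\bar\Lambda)^{-1}$, $G_{21}=G_{22}=s(sI+\beta\bar\Lambda)^{-1}$, and apply Lemma~\ref{l4} (whose bound collapses to the perfect square $(\beta\lambda_N+1)^2$) together with Lemma~\ref{l3}, finishing Zeno-exclusion as in Theorem~\ref{th2}. The only cosmetic differences are that you verify the gain preservation under the $Y$-projection explicitly (the paper asserts it) and take suprema of the block norms before summing rather than keeping the $\omega$-dependence, both of which are sound.
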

\begin{proof}
	Notice that the interconnecting system can be rewritten into the following form:
	\begin{equation}\label{eq13}
		\begin{aligned}
			\dot x&= -\beta L x+\begin{bmatrix}
				-\beta L & -\beta L
			\end{bmatrix}\begin{bmatrix}
				d\\ \epsilon
			\end{bmatrix},\\
			\begin{bmatrix}
				u\\ w
			\end{bmatrix}
			&=\begin{bmatrix}
				-\beta L\\ M
			\end{bmatrix}x
			+\begin{bmatrix}
				-\beta L & -\beta L\\
				M & M\\
			\end{bmatrix}\begin{bmatrix}
				d\\ \epsilon
			\end{bmatrix},\\
			d(s)&=\Delta^a (u(s)),\\
			\epsilon(s)& = \Delta^b(w(s)).
		 		\end{aligned}
	\end{equation}
	Here $\Delta^a(\cdot)=\mathrm{diag}\{\Delta_1^a,\cdots,\Delta_N^a\}$ is also a linear  operator with operator gain $\|\Delta^a\|_{\infty}\leq \eta$.
	Similarly to Theorem \ref{th2}, denoting $\tilde \epsilon=U^T\epsilon$, $\tilde w=U^Tw$, $\bar\epsilon=\tilde\epsilon_{2:N}$, $\bar w=\tilde w_{2:N}$, we can get that the system \dref{eq13} reaches robust consensus if  and only if the following system interconnection:
	\begin{equation}\label{eq14}
	\begin{aligned}
		\dot{\bar x}&=-\beta \bar\Lambda\bar x +\begin{bmatrix}
				-\beta \bar\Lambda & -\beta \bar\Lambda
			\end{bmatrix}\begin{bmatrix}
				\bar d\\ \bar\epsilon
			\end{bmatrix},\\
			\begin{bmatrix}
				\bar u\\ \bar w
			\end{bmatrix}
			&=\begin{bmatrix}
				-\beta \bar\Lambda\\ I
			\end{bmatrix}\bar x
			+\begin{bmatrix}
				-\beta \bar\Lambda & -\beta \bar\Lambda\\
				I & I\\
			\end{bmatrix}\begin{bmatrix}
				\bar d\\ \bar\epsilon
			\end{bmatrix},\\
			\bar d(s)&=Y^T\Delta^a (Y\bar u(s))=\bar\Delta^a(\bar u(s)),\\
			\bar \epsilon(s)& = Y^T\Delta^b(Y\bar w(s))=\bar\Delta^b(\bar w(s)) 
	\end{aligned}
	\end{equation}
	is internally stable, where $\bar\Delta^a(\cdot):\mathcal{L}_2\mapsto\mathcal{L}_2$ and $\bar\Delta^b(\cdot):\mathcal{L}_2\mapsto\mathcal{L}_2$ are  linear operators with operator gains $\|\bar\Delta^a\|_{\infty}\leq \eta$ and $\|\bar\Delta^b\|_{\infty}\leq\sqrt{2\alpha\lambda_N}$, respectively. 
	Note that we can further derive that \begin{equation}
		\begin{aligned}
			\begin{bmatrix}
				\bar d(s)\\
				\bar \epsilon(s)
			\end{bmatrix}=\mathbf{\Delta}\begin{bmatrix}
				\bar u(s)\\ \bar w(s)
			\end{bmatrix},
		\end{aligned}
	\end{equation}
	where $\mathbf{\Delta}$ is a linear block diagonal operator defined as $$\mathbf\Delta=\begin{bmatrix}
		\bar\Delta^a(\cdot) & 0(\cdot)\\
		0(\cdot) & \bar\Delta^b(\cdot)
	\end{bmatrix}$$ and $\|\mathbf \Delta\|_{\infty}\leq\gamma$.
	In light of Lemma \ref{l3}, the system \dref{eq14} reaches  internal stability if $\mu_{\mathbf\Delta}(G(j\omega))<\frac{1}{\gamma}$, where $G$ is the transfer matrix from 
	$\begin{bmatrix}
		\bar d^T  & \bar \epsilon^T
	\end{bmatrix}^T$ to $\begin{bmatrix}
		\bar u^T  & \bar w^T
	\end{bmatrix}^T$.
	From \dref{eq14}, it is easy to derive that $$\begin{aligned}
		G(s)&=\begin{bmatrix}
		-\beta s \bar \Lambda (sI+\beta\bar\Lambda)^{-1} & -\beta s \bar \Lambda (sI+\beta\bar\Lambda)^{-1}\\
		s(sI+\beta\bar\Lambda)^{-1} & s(sI+\beta\bar\Lambda)^{-1}
	\end{bmatrix}\\
	&\triangleq \begin{bmatrix}
		G_{11}(s) & G_{12}(s)\\
		G_{21}(s) & G_{22}(s)
	\end{bmatrix},
	\end{aligned} $$
	 where 
    $$\begin{aligned}
    	\|G_{11}(j\omega)\|^2&=\|G_{12}(j\omega)\|^2\\
	&=\|j\omega\beta\bar\Lambda(j\omega I+\beta\bar\Lambda)^{-1}\|^2\\
    	& =  \max_{i=2,\cdots,N}\frac{\beta^2\lambda_i^2\omega^2}{\omega^2+\beta^2\lambda_i^2}= \frac{\beta^2\lambda_N^2\omega^2}{\omega^2+\beta^2\lambda_N^2},
    \end{aligned}$$
    and
    $$\begin{aligned}
    	\|G_{22}(j\omega)\|^2&=\|G_{21}(j\omega)\|^2\\&=\|j\omega (j\omega I+\beta\bar\Lambda)^{-1}\|^2= \frac{\omega^2}{\omega^2+\beta^2\lambda_2^2}.
    \end{aligned}$$
    According to Lemma \ref{l4}, the system interconnection \dref{eq14} is internally stable, if  $\frac{\beta\lambda_N|\omega|}{\sqrt{\omega^2+\beta^2\lambda_N^2}}+\frac{|\omega|}{\sqrt{\omega^2+\beta^2\lambda_2^2}}<\frac{1}{\gamma}$, which holds if $\beta\lambda_N+1<\frac{1}{\gamma}$. Next, it remains to rule out the Zeno behavior. This procedure is quite similar to the counterpart  of Theorem \ref{th2}, and is omitted here for conciseness.  
    \end{proof}
\begin{remark}
Though the time-domain sampling mechanism  and the frequency-domain uncertainties are incompatible seemingly, they can indeed be tackled in a unified way under our operator-theoretic framework. This theorem quantitatively characterizes the relationship among Laplacian matrix, the output feedback gain  and the robustness of the event-triggered consensus algorithm against event triggering and frequency-domain uncertainties.
From this theorem, it is not difficult to find that the smaller  $\lambda_N$ and the parameter $\beta$ are, the larger $\gamma$ can be, meaning that  the more robust the network is against frequency-domain uncertainties  and sparser triggering. 
\end{remark}
\begin{remark}
   To the our best knowledge, in  most of the literatures in this realm, e.g.,\cite{cortes2016distributed},  \cite{dimarogonas2012distributed}, \cite{Garcia2013decentralized}, \cite{yi2017distributed}, \cite{berneburg2019ditributed}, frequency-domain uncertainties have not been addressed yet. The operator-theoretic approach proposed here links for the first time the time-domain sampling mechanism to the frequency-domain uncertainties, and lays the foundation of the robust analysis and synthesis of event-triggered control.              
\end{remark} 

\subsection{Network Topology Uncertainties }

In the previous parts of this paper, it is assumed that the agents interact via a fixed and known graph $\mathcal G$. In practice, however, the topology graph may be subject to various kinds of perturbations, which will render the network graph uncertain and  time-varying \cite{lirobust2017}. 
In this subsection, we consider a network graph with uncertain communication strengths. Specifically, each  element $a_{ij}$ in the adjacency matrix  cannot be known exactly   but is rather perturbed to a bounded region that can be expressed in the form of  $a_{ij}(1+\Delta^c_{ij}(\cdot))$, where $\Delta_{ij}^c(\cdot):\mathcal L_2\mapsto \mathcal L_2$ is a linear finite-gain operator in the time domain with operator norm $\|\Delta_{ij}^c\|_{\infty}\leq\delta<1$.  
In this case, distributed control law \dref{eq2} then becomes 
\begin{equation}\label{eq16}
\begin{aligned}
	u_i &=  -\beta\sum_{j=1}^Na_{ij}(\hat x_i-\hat x_j)-\beta \sum_{j=1}^Na_{ij}\Delta^c_{ij}(\hat x_i-\hat x_j)\\
	& = -\beta\sum_{j=1}^Na_{ij}(\hat x_i-\hat x_j)-\beta\sum_{j=1}^N(a_{ij}^{\frac{1}{2}}\Delta^c_{ij}(a_{ij}^{\frac{1}{2}}\hat x_i)\\
	& \quad -a_{ij}^{\frac{1}{2}}\Delta^c_{ij}(a_{ij}^{\frac{1}{2}}\hat x_j))\\
	& = -\beta L_i^T\hat x-\beta D_iW^{\frac{1}{2}}\Delta^c(W^{\frac{1}{2}}D^T\hat x),
\end{aligned}
\end{equation} 
where $D_i$ is the $i$-th row of the incidence matrix of the graph $\mathcal G$ and $\Delta^c(\cdot)$ is the block diagonal linear finite-gain operator with $\Delta^c_{ij}(\cdot)$ as its diagonal elements. It is easy to see that $\|\Delta^c\|_{\infty}\leq\delta$. Letting  $v=\Delta^c(W^{\frac{1}{2}}D^T\hat x)$ and $w=W^{\frac{1}{2}}D^T\hat x$, we have $v=\Delta^c(w)$. According to Lemma \ref{l2}, $v(s)=\bar\Delta^c(w(s))$. Define  the sampling error $e_i=\hat x_i-x_i$.
The triggering function of the agent $i$ is \begin{equation}\label{eq18}
	f_i=e_i^2-\alpha\sum_{j=1}^Na_{ij}(\hat x_i-\hat x_j)^2-\mu e^{-\nu t}.
\end{equation} 
 Recalling \dref{eq1}, then the system interconnection can be rewritten in the following compact form:
\begin{equation}\label{eq17}
\begin{aligned}
	&\dot x = -\beta Lx-\beta L e-\beta DW^{\frac{1}{2}}v,\\
	&w =W^{\frac{1}{2}}D^Tx+W^{\frac{1}{2}}D^Te,\\
	& z = M x+Me,\\
	& v(s)= \bar\Delta^c(w(s)).
\end{aligned}
\end{equation} 
Similarly, we have   $$\begin{aligned}
	e_i(t)
	=&\beta \int_{t_{k}^i}^tL_i^T\hat x(\tau)d\tau+\beta\int_{t_{k}^i}^tD_iW^{\frac{1}{2}}\Delta^c(W^{\frac{1}{2}}D^T\hat x(\tau))d\tau\\
	=& \beta \int_{t_{k}^i}^tD_iWz(\tau)d\tau+\beta\int_{t_{k}^i}^tD_iW^{\frac{1}{2}}\Delta^c(W^{\frac{1}{2}}z(\tau))d\tau\\
	=& \Delta^d(z(t)),
\end{aligned}$$
where $\Delta^d(\cdot):\mathcal L_2\mapsto\mathcal L_2$ is a linear operator in the time domain, following the proof of Theorem \ref{th1}.  In virtue of Lemma \ref{l2}, $e_i(s)=\bar\Delta^d(z(s))$, where $\bar\Delta^d(\cdot):\mathcal L_2\mapsto\mathcal L_2$ is a linear operator in the frequency domain.
Then, in light of Theorem \ref{th1}, it is easy to see that $\|\bar\Delta\|^d_{\infty}\leq \sqrt{2\alpha \lambda_N}$.
Moreover, $$\begin{aligned}
	\|\bar\Delta^c\|^2_{\infty}&=\sup_{\|w\|_2=1}\frac{1}{2\pi}\int_{-\infty}^{\infty}v^*(j\omega)v(j\omega)d\omega\\
	& = \sup_{\|w\|_2=1}\int_0^{\infty}v^*(t)v(t)dt\\
	& = \|\Delta^c\|^2_{\infty}\leq\delta^2.
\end{aligned}$$
We have $\|\bar\Delta^c\|_{\infty}\leq\delta.$

\begin{theorem}\label{th5}
	 The  network   \dref{eq1} reaches consensus  under  the event-triggered control law \dref{eq16}  with the triggering function   \dref{eq18}, if $\gamma<\sqrt{\frac{\lambda_2^2}{\lambda_2^2+\lambda_N^2}}$ where $\gamma=\max \{\delta,\sqrt{2\alpha \lambda_N} \}$. Moreover, the closed-loop system does not exhibit the Zeno behavior.
\end{theorem}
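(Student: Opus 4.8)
The plan is to follow the template of Theorem~\ref{th44}: diagonalize the consensus dynamics to strip off the agreement direction, package the topology operator $\bar\Delta^c$ and the event-triggering operator $\bar\Delta^d$ into one block-diagonal uncertainty, and then apply the structured small-gain criterion of Lemma~\ref{l3} together with the $\mu$-estimate of Lemma~\ref{l4}.

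First I would remove the consensus mode. With $U=\begin{bmatrix}1_N/\sqrt N & Y\end{bmatrix}$, $Y^TY=I_{N-1}$, $YY^T=M$, $U^TLU=\mathrm{diag}\{0,\bar\Lambda\}$, and $\bar x=Y^Tx$, $\bar e=Y^Te$, $\bar z=\bar x+\bar e$, robust consensus is equivalent to asymptotic stability of the $\bar x$-subsystem. Since $D^T1_N=0$ and $M1_N=0$, both $w=W^{1/2}D^T(x+e)$ and $z=M(x+e)$ annihilate the agreement component, so in the reduced coordinates $w=P(\bar x+\bar e)$ and $z=Y\bar z$, where $P:=W^{1/2}D^TY$. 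The step I expect to be least routine is verifying that this $P$ carries all the incidence/weight data through the projection in a usable way: the identity $P^TP=Y^TDWD^TY=Y^TLY=\bar\Lambda$ gives $\|P\|=\sqrt{\lambda_N}$, which is exactly the constant needed later.

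Next I would read off the generalized plant from \dref{eq17}. Setting $\Phi:=(sI+\beta\bar\Lambda)^{-1}$ and using $I-\beta\Phi\bar\Lambda=s\Phi$, eliminating $\bar x$ gives the map from the uncertainty outputs $\begin{bmatrix}v\\\bar e\end{bmatrix}$ to the uncertainty inputs $\begin{bmatrix}w\\\bar z\end{bmatrix}$ as $G=\begin{bmatrix}-\beta P\Phi P^T & sP\Phi\\ -\beta\Phi P^T & s\Phi\end{bmatrix}$, closed by $\mathbf\Delta=\mathrm{diag}\{\bar\Delta^c,\ Y^T\bar\Delta^d Y\}$ with $\|\mathbf\Delta\|_\infty\le\gamma$. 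By Lemma~\ref{l3} the loop is robustly internally stable provided $\sup_{\omega}\mu_{\mathbf\Delta}(G(j\omega))<1/\gamma$, and Lemma~\ref{l4} bounds $\mu^2\le\|G_{11}\|^2+\|G_{22}\|^2+2\|G_{12}\|\,\|G_{21}\|$.

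Finally I would estimate the four blocks at $s=j\omega$ using $\|\Phi(j\omega)\|=1/\sqrt{\omega^2+\beta^2\lambda_2^2}$ and $\|P\|=\sqrt{\lambda_N}$, namely $\|G_{11}\|\le\beta\lambda_N\|\Phi\|$, $\|G_{12}\|\le|\omega|\sqrt{\lambda_N}\,\|\Phi\|$, $\|G_{21}\|\le\beta\sqrt{\lambda_N}\,\|\Phi\|$, and $\|G_{22}\|=|\omega|\,\|\Phi\|$, which collapse the $\mu$-estimate into $\mu^2\le(\beta\lambda_N+|\omega|)^2/(\omega^2+\beta^2\lambda_2^2)$. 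Maximizing the right-hand side over $\omega$ (the stationary point is $\omega^\star=\beta\lambda_2^2/\lambda_N$) yields $\sup_\omega\mu\le\sqrt{\lambda_2^2+\lambda_N^2}/\lambda_2$; since the hypothesis $\gamma<\sqrt{\lambda_2^2/(\lambda_2^2+\lambda_N^2)}$ is precisely $\sqrt{\lambda_2^2+\lambda_N^2}/\lambda_2<1/\gamma$, Lemma~\ref{l3} delivers robust consensus. The absence of Zeno behavior then follows verbatim from Theorem~\ref{th2}, because $\dot e_i$ stays bounded between triggers while the term $\mu e^{-\nu t}$ enforces a uniform positive inter-event lower bound. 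The main obstacle is the scalar maximization over $\omega$, which must be carried out exactly to recover the eigenratio bound $\sqrt{\lambda_2^2+\lambda_N^2}/\lambda_2$ rather than a cruder constant.
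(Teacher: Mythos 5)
Your proposal is correct and follows essentially the same route as the paper: both remove the consensus mode, package $\bar\Delta^c$ and the triggering operator into one block-diagonal uncertainty, invoke Lemma \ref{l3} with the $\mu$-bound of Lemma \ref{l4}, and arrive at the identical frequency-domain condition $(\beta\lambda_N+|\omega|)^2/(\omega^2+\beta^2\lambda_2^2)<1/\gamma^2$, whose maximization over $\omega$ gives the eigenratio bound. The only (immaterial) difference is that the paper keeps the projected state $\xi=Mx$ and computes the block norms via spectral radii of the full $N$-dimensional matrices, whereas you pass to reduced coordinates through $Y$ and use $P=W^{1/2}D^TY$ with $P^TP=\bar\Lambda$; the resulting bounds coincide.
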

\begin{proof}
	Letting $\xi=Mx$, the system interconnection can be written in the following form:
	\begin{equation}\label{eq19}
	\begin{aligned}
		& \dot \xi=-\beta L\xi +\begin{bmatrix}
			-\beta DW^{\frac{1}{2}} & -\beta L
		\end{bmatrix} 
		\begin{bmatrix}
			v\\ e
		\end{bmatrix},\\
		& \begin{bmatrix}
			w\\ z
		\end{bmatrix}=
		\begin{bmatrix}
			W^{\frac{1}{2}}D^T\\ M 
		\end{bmatrix}\xi+
		\begin{bmatrix}
			0 & W^{\frac{1}{2}}D^T\\
			0& M
		\end{bmatrix}
		\begin{bmatrix}
			v\\ e
		\end{bmatrix},\\
		& \begin{bmatrix}
			v(s)\\ e(s)
		\end{bmatrix}=
		\begin{bmatrix}
			\bar\Delta^c &\\
			& \bar\Delta^d
		\end{bmatrix}
		\begin{bmatrix}
			w(s)\\ z(s)
		\end{bmatrix}.
	\end{aligned}
	\end{equation}
It is also worth noting that the state $x$ of the system reaches consensus if and only if $\xi$ is asymptotically stable, i.e., the system interconnection \dref{eq19} is internally stable.
According to   Lemma \ref{l3}, it is enough to let $\mu_{\mathbf\Delta}(G(j\omega))<\frac{1}{\gamma}$, where $G$ is the transfer matrix from $\begin{bmatrix}
	v^T & e^T
\end{bmatrix}^T $ to  $\begin{bmatrix}
	w^T & z^T
\end{bmatrix}^T $ and 
$$\begin{aligned}
	G  &\triangleq  \begin{bmatrix}
 	G_{11}(s) & G_{12}(s)\\
 	G_{21}(s) & G_{22}(s)
 \end{bmatrix}\\
	&=\begin{bmatrix}
	-\beta W^{\frac{1}{2}}D^T(sI+\beta L)^{-1} DW^{\frac{1}{2}}  & sW^{\frac{1}{2}}D^T(sI+\beta L)^{-1}\\
	-\beta M(sI+\beta L)^{-1}DW^{\frac{1}{2}} & sM(sI+\beta L)^{-1}
\end{bmatrix}.\\
\end{aligned}
$$
Note that
$$\begin{aligned}
	&\|G_{11}(j\omega)\|^2  \\
	&\qquad=\rho(\beta^2W^{\frac{1}{2}}D^T(-j\omega I+\beta L)^{-1}L(j\omega I+\beta L)^{-1} DW^{\frac{1}{2}})\\
		&\qquad= \rho(\beta^2 (-j\omega I+\beta \Lambda)^{-1}\Lambda(j\omega I+\beta \Lambda)^{-1}\Lambda)\\
	& \qquad= \frac{\beta^2\lambda_N^2}{\omega^2+\beta^2\lambda_N^2}\leq \frac{\beta^2\lambda_N^2}{\omega^2+\beta^2\lambda_2^2}.
\end{aligned}$$
Similarly, $$\|G_{12}(j\omega)\|^2\leq\frac{\omega^2\lambda_N}{\omega^2+\beta^2\lambda_2^2},$$ $$\|G_{21}(j\omega)\|^2=\max_{i=2,\cdots,N}\frac{\beta^2\lambda_i}{\omega^2+\beta^2\lambda_i^2}\leq\frac{\beta^2\lambda_N}{\omega^2+\beta^2\lambda_2^2},$$
and
$$\|G_{22}(j\omega)\|^2=\frac{\omega^2}{\omega^2+\beta^2\lambda_2^2}.$$
According to Lemma \ref{l4}, the system interconnection is internally stable, if  
\begin{equation}\label{add}
\frac{(\beta\lambda_N+|\omega|)^2}{\beta^2\lambda_2^2+\omega^2}<\frac{1}{\gamma^2},\quad\forall \omega\in \mathbf R.	
\end{equation}
It is not difficult to find that a sufficient condition for \dref{add} to hold is $\gamma<\sqrt{\frac{\lambda_2^2}{\lambda_2^2+\lambda_N^2}}$. The excluding of the Zeno behavior is quite similar to the counterpart of Theorem \ref{th2} and  is omitted here for brevity. 
\end{proof}
\begin{remark}
	{  Different from the additive dynamic uncertainty case considered in the previous subsection, the robustness of the event-triggered consensus algorithm against the network topology uncertainties is closely related to the eigenratio $\frac{\lambda_N}{\lambda_2}$ of the Laplacian matrix. When the eigenratio $\frac{\lambda_N}{\lambda_2}$ is smaller, the robust margin $\gamma$ is larger in the sense that the network can tolerant larger network uncertainties and larger sampling errors. }
\end{remark}
%\begin{remark}
% 
%Note that in Theorem \ref{th5}, the robust margin against aperiodic sampling and network topology uncertainties is not related to the controller gain $\beta$ at all, which is quite different from that in Theorem \ref{th44}. This phenomenon can be understood as follows: 
%When there is no uncertainty, the event-triggered protocol referred to in Theorem \ref{th2} can ensure consensus as long as  $2\alpha\lambda_N<1$, no matter how large or small the positive number $\beta$ is. The control gain parameter $\beta>0$ only makes a scaling equally on each edge, which does not change the topology structure.
%Theorem \ref{th5} examines the robustness of the event-triggered consensus algorithm to the aperiodic sampling and the network topology change. The controller gain  $\beta$, however, does not change the  topology structure at all, and therefore is irrelevant to the robust margin in Theorem \ref{th5}.
%Theorem \ref{th44}, however, aims to examine  the robustness of the event-triggered consensus algorithm against additive uncertainties of agent dynamics, which is definitely close related to the controller gain $\beta$ and the network topology .
%\end{remark}
\begin{remark}
Note that the uncertainties of the network topology can be either time-domain or frequency-domain, which are equivalent according to Lemma \ref{l2}. Actually, it is not necessary for the uncertainties to be in $\mathcal{RH}_{\infty}$. As long as the uncertainties  are finite-gain $\mathcal L_2$-stable operators, e.g., bounded communication delays or nonlinearities, the robust event-triggered consensus problem can be solved in our operator-theoretic framework.
\end{remark}
%\begin{remark}
%	Works on the consensus problem under network uncertainties can be traced to  \cite{lirobust2017},\cite{zelazo2017robustness},\cite{lirobust2019}, in this paper we introduce the event-triggered algorithm, which completes their works. 
%\end{remark}

\section{Extension to Dynamic Average Consensus }\label{s5}
In the previous sections, the control objective is to reach static average consensus, i.e., each state variable $x_i$ converges to the average of the initial states of all agents. In this section, we generalize the results to the event-based dynamic average consensus (DAC) problem. { Different from the static average consensus problem, the DAC problem  aims to make the state of  each agent converge to the average of the reference signals \cite{kia2019tutorial}.} Till now,  there have been a lot of DAC algorithms proposed in the literatures for first-order network systems, e.g.,  \cite{freeman2006stability}, \cite{kia2019tutorial}. An  event-triggered DAC algorithm is provided in  \cite{kia2015distributed}, where the agent model is assumed to be be nominal and there are no uncertainties.  Here, we consider the following  network subject to additive dynamic uncertainties: { 
\begin{equation}\label{eq20}
	\begin{aligned}
		&\dot x_i = u_i +\dot r_i,\\
		& y_i= x_i+d_i,\\
		& d_i(s) = \Delta_i^a (u_i(s)),\quad	i =1,\cdots, N,
	\end{aligned}
\end{equation}}
where $r_i$ denote the reference signals and $\Delta_i^a$ is defined as in \dref{eq8}.  { The objective of the robust DAC problem considered in this section is to ensure that for the uncertain agents in \dref{eq20}, $x_i \rightarrow \frac{1}{N}\sum_{i=1}^Nr_i, \forall i=1,\cdots,N$ as $t\rightarrow\infty$.} Throughout this section, we suppose that the following assumption holds.
\begin{assumption}
	The reference signals $r_i$ are bounded. 
\end{assumption}

{ To achieve DAC, the following event-triggered algorithm is utilized:}
\begin{equation}\label{eq21}
	\begin{aligned}
		&\dot w_i = -\theta(w_i-y_i)-\beta\sum_{j=1}^Na_{ij}( \hat w_i- \hat w_j),\\
		& u_i = -\beta \sum_{j=1}^Na_{ij}( \hat w_i- \hat w_j),
	\end{aligned}
\end{equation}
where $w_i$  is the augmented state variable. To save the communication cost, the event-triggered scheme  is used.  At the $k$-th triggering instant $t_{k}^i$ of the agent $i$, the $i$-th agent  updates its estimate of its own  states $\hat w_i$  to $w(t_{k}^i)$ and its  estimate of its own output $\hat y_i$  to $y(t_{k}^i)$ and broadcasts them to all its neighbors. During two triggering instants of the agent $i$, $\hat w_i$ is calculated in the following way:
$$
\hat w_i(t)= e^{-\theta(t-t_{k}^i)}w(t_{k}^i)+\int_{t_{k}^i}^te^{-\theta(t-\tau)}\theta \hat y_i(\tau)d\tau,
$$
and $$
\hat y_i(t) = y(t_{k}^i).
$$
The triggering function of the agent $i$ is given by 
\begin{equation}\label{eq22}
	f_i= \epsilon_i^2-\alpha \sum_{j=1}^Na_{ij}(\hat w_i-\hat w_j)^2-\mu e^{-\nu t},
\end{equation}
where $\epsilon_i=\hat w_i-w_i$ denotes the sampling error. Define $z_i=\frac{1}{N}\sum_{j=1}^N(\hat w_i-\hat w_j)$, $z=\begin{bmatrix}
	z_1& \cdots,z_N
\end{bmatrix}^T$ and $\epsilon =\begin{bmatrix}
	\epsilon_1& \cdots,\epsilon_N
\end{bmatrix}^T$. It then follows that $z=M\hat w$. 
Similarly, we have the following claim about the relationship between $\epsilon$ and $z$.
\begin{theorem}
{ 	For the  triggering function \dref{eq22}, it  follows that $\epsilon(s)=\Delta^e(z(s))$, where $\Delta^e(\cdot)$ is a nonlinear finite-gain $\mathcal L_2$ stable operator with the operator norm $\|\Delta^e\|_{\infty}\leq\sqrt{2\alpha \lambda_N}$. }
\end{theorem}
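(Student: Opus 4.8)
The plan is to follow the two-step template of Theorem~\ref{th1}: first exhibit $\epsilon$ as the image of a causal operator driven by $z$, and then read the $\mathcal L_2$ gain off the triggering inequality \dref{eq22}, which is insensitive to the internal structure of that operator.

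For the operator structure, I would start by differentiating the hold formula for $\hat w_i$. Since $\hat y_i$ is held constant on $[t_k^i,t_{k+1}^i)$, this yields the filter dynamics $\dot{\hat w}_i=-\theta(\hat w_i-\hat y_i)$, whereas the true augmented state obeys $\dot w_i=-\theta(w_i-y_i)+u_i$ with $u_i=-\beta L_i^T\hat w=-\beta L_i^T z$ (using $L=LM$). Subtracting and integrating from a triggering instant, where $\epsilon_i(t_k^i)=0$, I would obtain
\begin{equation*}
\epsilon_i(t)=\int_{t_k^i}^t\theta e^{-\theta(t-\tau)}\big(\hat y_i(\tau)-y_i(\tau)\big)\,d\tau+\beta L_i^T\int_{t_k^i}^t e^{-\theta(t-\tau)}z(\tau)\,d\tau.
\end{equation*}
The second term is a linear causal functional of $z$, exactly as in Theorem~\ref{th1}. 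The first term carries the \emph{output} sampling error $\hat y_i-y_i=y_i(t_k^i)-y_i(t)$; expanding $y_i=x_i+d_i$ with $\dot x_i=u_i+\dot r_i$ and $d_i=\Delta_i^a(u_i)$ shows that it too is generated by $u_i=-\beta L_i^T z$, up to a bounded reference contribution, but is now processed through the exponential kernel, the zero-order hold of $y_i$, and the state-dependent triggering times $t_k^i$. This composition is what renders $\Delta^e:z\mapsto\epsilon$ nonlinear, in contrast with the purely linear operator of Theorem~\ref{th1}, and the bounded references should be isolated as an additive offset consistent with the finite-gain definition.

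For the gain bound I would discard the explicit representation and argue verbatim as in Theorem~\ref{th1}. The condition \dref{eq22} gives $\epsilon_i^2\le\alpha\sum_{j=1}^N a_{ij}(\hat w_i-\hat w_j)^2+\mu e^{-\nu t}$ for every $t$; summing over $i$, invoking $\sum_{i,j}a_{ij}(\hat w_i-\hat w_j)^2=2\hat w^T L\hat w$, and using $z=M\hat w$ together with $MLM=L$ to replace $\hat w^T L\hat w$ by $z^T L z\le\lambda_N\|z\|^2$, I would reach
\begin{equation*}
\|\epsilon\|_2^2\le\int_0^\infty\big(2\alpha\,z^T L z+N\mu e^{-\nu t}\big)\,dt\le 2\alpha\lambda_N\|z\|_2^2+\frac{N\mu}{\nu}.
\end{equation*}
Completing the square gives $\|\epsilon\|_2\le\sqrt{2\alpha\lambda_N}\,\|z\|_2+\sqrt{N\mu/\nu}$, i.e.\ $\|\Delta^e\|_\infty\le\sqrt{2\alpha\lambda_N}$, which coincides with the bound of Theorem~\ref{th1}.

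I expect the gain bound to be routine and the operator-structure step to be the real obstacle. The delicate points are justifying that the output sampling error $\hat y_i-y_i$ is genuinely expressible through $z$ once the disturbance operator $\Delta_i^a$ and the hold are unwound, isolating the bounded reference terms as a finite-gain offset rather than a signal that destroys the operator interpretation, and reading the Laplace-domain restatement $\epsilon(s)=\Delta^e(z(s))$ in the generalized sense appropriate to a nonlinear map, since Lemma~\ref{l2} applies literally only to its linear component.
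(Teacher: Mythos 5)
Your proposal is correct and follows essentially the same route as the paper: you derive the same exponential-kernel representation of $\epsilon_i$ (via ODE subtraction rather than the paper's explicit hold formulas, a cosmetic difference), unwind $\hat y_i-y_i$ through $x_i$ and $d_i=\Delta_i^a(u_i)$ with $u_i=-\beta L_i^Tz$ while isolating the bounded reference terms as an additive offset, and read the gain $\sqrt{2\alpha\lambda_N}$ off the triggering inequality exactly as in Theorem~\ref{th1}. The only minor discrepancy is attribution: the paper pins the nonlinearity of $\Delta^e$ specifically on the reference-signal offset (the ZOH and triggering times already appear in the linear operator of Theorem~\ref{th1}), whereas you fold these together, but your handling of the terms themselves matches the paper's proof.
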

\begin{proof}
	It follows from  \dref{eq21} that for $t\in [t_{k}^i,t_{k+1}^i)$, $$\begin{aligned}
		w_i(t)=&e^{-\theta(t-t_{k}^i)}w(t_{k}^i)+\int_{t_{k}^i}^te^{-\theta(t-\tau)}\theta  y_i(\tau)d\tau\\
		& -\int_{t_{k}^i}^te^{-\theta(t-\tau)}\beta\sum_{j=1}^Na_{ij}(\hat w_i(\tau)-\hat w_j(\tau))d\tau.
	\end{aligned}$$
	Therefore, $$\begin{aligned}
		\epsilon_i =& \hat w_i-w_i\\=& \int_{t_{k}^i}^te^{-\theta(t-\tau)}\beta\sum_{j=1}^Na_{ij}(\hat w_i(\tau)-\hat w_j(\tau))d\tau\\
		& + \int_{t_{k}^i}^te^{-\theta(t-\tau)}\theta  (\hat y_i(\tau)-y_i(\tau))d\tau\\
		=& \int_{t_{k}^i}^te^{-\theta(t-\tau)}\theta  ((\hat y_i(\tau)-y_i(\tau))-(\hat r_i(\tau)-r_i(\tau)))d\tau\\
		&+\int_{t_{k}^i}^te^{-\theta(t-\tau)}\theta  (\hat r_i(\tau)-r_i(\tau))d\tau\\
		& + \int_{t_{k}^i}^te^{-\theta(t-\tau)}\beta\sum_{j=1}^Na_{ij}(\hat w_i(\tau)-\hat w_j(\tau))d\tau,
	\end{aligned}$$
	where $\hat r_i(t)=r_i(t_{k}^i)$, $\forall t\in [t_{k}^i,t_{k+1}^i)$.
	On the other hand,
	 $$\begin{aligned}
		\hat y_i-y_i&=x(t_{k}^i)+d(t_{k}^i)-(x(t)+d(t))\\
		& = d(t_{k}^i)-d(t)+\beta\int_{t_{k}^i}^t \sum_{j=1}^Na_{ij}(\hat w_i-\hat w_j )d\tau\\
		&\quad-r_i(t)+r_i(t_{k}^i).\\
	\end{aligned}$$
	{ Since $d_i(s)=\Delta_i^a(u_i(s))$, we have $d(t)=\tilde{\Delta}_i^a(u_i(t))$ and   $d(t_k^i)=\bar\Delta_i^a(u_i(t))$, where $\tilde{\Delta}_i^a(\cdot)$ and $\bar\Delta_i^a(\cdot)$ are linear operators in the time domain. }
	Notice that { 
	$$\begin{aligned}
		d_i(t_{k}^i)-d_i(t)=&\bar\Delta_i^a(u_i(t))-\tilde{\Delta}_i^a(u_i(t))\\
		=& \Theta_i(u_i(t)) \\
		=& \Theta_i(-\beta\sum_{j=1}^Na_{ij}(\hat w_i-\hat w_j))\\
		=& \Xi_i (z(t))
	\end{aligned}
	$$ } where $\Theta,\Xi(\cdot):\mathcal L_2\mapsto\mathcal L_2$ are linear operators. 
	Therefore, $$\begin{aligned}
		\hat y_i-y_i-(\hat r_i-r_i)&=\beta\int_{t_{k}^i}^t \sum_{j=1}^Na_{ij}(\hat w_i-\hat w_j )d\tau\\
		&\quad+d(t_{k_i}^i)-d(t)\\&=\beta L_i^T\int_{t_{k}^i}^tz(\tau)d\tau+\Xi_i(z(t))\\&=\Phi_i(z(t)),
	\end{aligned}$$  where $\Phi_i(\cdot):\mathcal L_2\mapsto \mathcal L_2$ is a linear operator. Note also that $$\begin{aligned}
		&\int_{t_{k}^i}^te^{-\theta(t-\tau)}\beta\sum_{j=1}^Na_{ij}(\hat w_i(\tau)-\hat w_j(\tau))d\tau\\&\qquad=\beta L_i^T\int_{t_{k}^i}^te^{-\theta(t-\tau)}z(\tau)d\tau=\Gamma_i(z(t)),
	\end{aligned}$$
	where $\Gamma_i(\cdot):\mathcal L_2\mapsto \mathcal L_2$ is a linear operator.
	We then have $$\begin{aligned}
		\epsilon_i(t)=&\int_{t_{k}^i}^te^{-\theta(t-\tau)}\theta  \Phi_i(z(\tau))d\tau+\Gamma_i(z(t))\\
		&+\int_{t_{k_i}^i}^te^{-\theta(t-\tau)}\theta  (\hat r_i(\tau)-r_i(\tau))d\tau\\
		=& \Psi_i(z(t))+\int_{t_{k}^i}^te^{-\theta(t-\tau)}\theta  (\hat r_i(\tau)-r_i(\tau))d\tau,
	\end{aligned}$$
	where $\Psi_i(\cdot):\mathcal L_2\mapsto \mathcal L_2$ is a linear operator. Since  $r_i$ is bounded $\forall t>0$, {  then we have $\epsilon_i(s)=\Delta^e_i(z(s))$ where $\Delta^e_i(\cdot):\mathcal L_2\mapsto \mathcal L_2$ is a nonlinear operator in the frequency domain.  It then remains to determine the operator gain $\|\Delta^e\|_{\infty}$. This procedure is quite similar to the counterpart of Theorem \ref{th1}, and  is omitted here for brevity. }
\end{proof}
\begin{remark}
	Different from the static event-triggered average consensus problem in the previous sections, due to the introducing of the exogenous reference signals, the sampling errors  in the DAC problem are no longer images of linear operators acting on the consensus error of the sampled states.   Nevertheless, the triggering function ensures that these operators are finite-gain $\mathcal L_2$ stable and thus can be handled together with the additive dynamic uncertainties using the operator-theoretic approach.
\end{remark}

{ Define the DAC tracking error of the agent $i$ as $e_i=x_i-\frac{1}{N}\sum_{i=1}^Nr_i.$ Denoting  further  $p_i=x_i-r_i$,  $p=[p_1,\cdots,p_N]^T$, $e=[e_1,\cdots,e_N]^T$, $d=[d_1,\cdots,d_N]^T$, $r=[r_1,\cdots,r_N]^T$, $u=[u_1,\cdots,u_N]^T$ and $w=[w_1,\cdots,w_N]^T$,  the DAC system can then be rewritten in a compact form as follows:}
\begin{equation}\label{eq23}
	\begin{aligned}
		&\begin{bmatrix}
			\dot p \\ \dot w
		\end{bmatrix}=\begin{bmatrix}
			0 & -\beta L\\
			\theta I & -\theta I-\beta L
		\end{bmatrix}\begin{bmatrix}
			p\\ w
		\end{bmatrix}\\
		&\qquad+ \begin{bmatrix}
			0 & -\beta L& 0\\
			\theta I& -\beta L& \theta I
		\end{bmatrix}\begin{bmatrix}
			d\\\epsilon\\r 
		\end{bmatrix},\\
		& \begin{bmatrix}
			u\\ z\\e
		\end{bmatrix}=
		\begin{bmatrix}
			0 &-\beta L\\
			0 & M\\
			I & 0\\
		\end{bmatrix}\begin{bmatrix}
			p\\ w
		\end{bmatrix}+
		\begin{bmatrix}
			0 &-\beta L & 0\\
			0 & M & 0\\
			0 & 0 & M
		\end{bmatrix}\begin{bmatrix}
			d\\\epsilon\\r 
		\end{bmatrix}.
	\end{aligned}
\end{equation}
Denoting $\tilde u =U^Tu$, $\bar u = \tilde u_{2:N}$, $\tilde z =U^Tz$, $\bar z = \tilde z_{2:N}$, $\tilde d =U^Td$, $\bar d = \tilde d_{2:N}$, $\tilde \epsilon =U^T\epsilon$, $\bar \epsilon = \tilde \epsilon_{2:N}$, $\tilde p =U^Tp$, $\bar p = \tilde p_{2:N}$, $\tilde w =U^Tw$, $\bar w = \tilde w_{2:N}$, $\tilde e=U^Te$, $\tilde r=U^Tr$, we  have 
\begin{equation}\label{eq24}
	\begin{aligned}
		&\begin{bmatrix}
			\dot {\bar p} \\ \dot {\bar w}
		\end{bmatrix}=\begin{bmatrix}
			0 & -\beta \bar \Lambda \\
			\theta I & -\theta I-\beta \bar \Lambda
		\end{bmatrix}\begin{bmatrix}
			 \bar p\\ \bar w
		\end{bmatrix}\\
		&\qquad\quad+ \begin{bmatrix}
			0 & -\beta \bar \Lambda& 0\\
			\theta I& -\beta \bar \Lambda& \theta I
		\end{bmatrix}\begin{bmatrix}
			\bar d\\\bar\epsilon\\\tilde r 
		\end{bmatrix},\\
		& \begin{bmatrix}
			\bar u\\ \bar z\\\tilde e
		\end{bmatrix}=
		\begin{bmatrix}
			0 &-\beta \bar \Lambda\\
			0 & I\\
			I & 0\\
		\end{bmatrix}\begin{bmatrix}
			\bar p\\ \bar w
		\end{bmatrix}+
		\begin{bmatrix}
			0 &-\beta \bar \Lambda & 0\\
			0 & I & 0\\
			0 & 0 & M
		\end{bmatrix}\begin{bmatrix}
			\bar d\\\bar\epsilon\\\tilde r 
		\end{bmatrix}.
	\end{aligned}
\end{equation}
Note  that 
$$\begin{aligned}
& \bar d=Y^T \Delta^a(Y\bar u)= \bar \Delta^a (\bar u),	\\
& \bar \epsilon = Y^T\Delta^e(Y\bar z) =\bar\Delta^e(\bar z),
\end{aligned}
$$
where $\bar\Delta^a,\bar \Delta ^b:\mathcal L_2\mapsto\mathcal L_2  
$ are nonlinear  finite-gain operators with  $\|\bar\Delta^a\|_{\infty}\leq\eta$ and $\|\bar\Delta^e\|_{\infty}\leq \sqrt{\alpha \lambda_N}$.

In this section, we aim to examine the robustness of the event-triggered DAC algorithm \dref{eq20} and \dref{eq21} under additive dynamic uncertainties, which lies in two aspects: First, when there is no external reference signal $r$, the states $x$ and $w$ in the DAC algorithm achieve robust consensus; Second, when there is a reference signal $r$, we want to see how small the $H_{\infty}$ norm of the transfer function from the reference signal $r$ to the average  tracking error $e$, i.e., $\|T_{re}\|_{\infty}$, could be.

We now address the first problem and assume that the reference signal $r$ does not exist temporarily. It is obvious that $x$ and $w$ reach consensus if and only if $\bar p$ and $\bar w$ are  both asymptotically stable. It is easy to derive from \dref{eq24} that 
\begin{equation}\label{eq25}
	\begin{aligned}
	&\begin{bmatrix}
	\bar d\\\bar \epsilon
	\end{bmatrix}=\begin{bmatrix}
		\bar \Delta^a &0\\ 0 & \bar \Delta^e
	\end{bmatrix}\begin{bmatrix}
		\bar u\\\bar z 
	\end{bmatrix},\\
		&\begin{bmatrix}
			\bar u\\\bar z 
		\end{bmatrix}=\begin{bmatrix}
			G_{11} & G_{12}\\
			G_{21} & G_{22}
		\end{bmatrix}
		\begin{bmatrix}
			\bar d\\ \bar\epsilon 
		\end{bmatrix},
	\end{aligned}
\end{equation}
where $$\begin{aligned}
G_{11} &= -\theta\beta s \bar \Lambda\Pi, ~G_{12} = -\beta(s^2+s\theta ) \bar\Lambda \Pi,\\
G_{21} &= s\theta\Pi,~G_{22}= (s^2+s\theta )\Pi,\\
\Pi &= (s^2I+s(\theta I+\beta \bar \Lambda) +\theta\beta \bar \Lambda)^{-1}.
\end{aligned}$$

Following similar lines in proving Theorem \ref{th44}, we can obtain the following theorem which gives a sufficient condition for the event-triggered DAC system to reach robust consensus.
\begin{theorem}
	Let $\gamma=\max \{\eta, \sqrt{2\alpha \lambda_N}\}$. The uncertain network system \dref{eq20} reaches robust consensus  under the event-triggered DAC algorithm  \dref{eq21}, if $\frac{\theta\beta\lambda_N}{\theta+\beta\lambda_N}+1<\frac{1}{\gamma}$. 
\end{theorem}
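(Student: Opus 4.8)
The plan is to mirror the argument of Theorem~\ref{th44} exactly, now for the augmented loop. Having dropped the reference (the ``first problem''), I would read off from \dref{eq25} the two-block interconnection whose structured uncertainty is $\mathbf\Delta=\mathrm{diag}\{\bar\Delta^a,\bar\Delta^e\}$ with $\|\mathbf\Delta\|_{\infty}\le\gamma$, and whose plant is the $2\times2$ block transfer matrix $G$ with the listed entries $G_{11},\dots,G_{22}$ and $\Pi$. Since, as noted just before the statement, robust consensus of $x$ and $w$ is equivalent to asymptotic stability of $(\bar p,\bar w)$, i.e.\ to internal stability of this loop for every admissible $\Delta$, Lemma~\ref{l3} reduces everything to verifying $\sup_{\omega}\mu_{\mathbf\Delta}(G(j\omega))<1/\gamma$, and the two-block estimate of Lemma~\ref{l4} lets me replace $\mu_{\mathbf\Delta}$ by $\sqrt{\|G_{11}\|^2+\|G_{22}\|^2+2\|G_{12}\|\,\|G_{21}\|}$.

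The computational heart is evaluating the four frequency-response norms. The key simplification I would exploit is the factorization $s^2+s(\theta+\beta\lambda_i)+\theta\beta\lambda_i=(s+\theta)(s+\beta\lambda_i)$, so that $\Pi$ is diagonal with entries $1/[(s+\theta)(s+\beta\lambda_i)]$. Substituting this into $G_{11}=-\theta\beta s\bar\Lambda\Pi$, $G_{12}=-\beta(s^2+s\theta)\bar\Lambda\Pi$, $G_{21}=s\theta\Pi$, $G_{22}=(s^2+s\theta)\Pi$ and cancelling the common $(s+\theta)$ factor, I obtain diagonal matrices whose entries satisfy the clean relations $G_{11}=c(s)G_{12}$ and $G_{21}=c(s)G_{22}$, with $c(j\omega)=\theta/\sqrt{\omega^2+\theta^2}\le1$. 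A monotonicity check in $\lambda_i$ (the map $\lambda\mapsto\beta\lambda/\sqrt{\omega^2+\beta^2\lambda^2}$ is increasing, while $\lambda\mapsto1/\sqrt{\omega^2+\beta^2\lambda^2}$ is decreasing) pins the maxima: $\|G_{12}(j\omega)\|=A:=\beta\lambda_N|\omega|/\sqrt{\omega^2+\beta^2\lambda_N^2}$ is attained at $\lambda_N$ and $\|G_{22}(j\omega)\|=B:=|\omega|/\sqrt{\omega^2+\beta^2\lambda_2^2}$ at $\lambda_2$, so that $\|G_{11}\|=c\,A$ and $\|G_{21}\|=c\,B$.

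With these identities the radicand collapses to a perfect square,
\[
\|G_{11}\|^2+\|G_{22}\|^2+2\|G_{12}\|\,\|G_{21}\|=c^2A^2+B^2+2cAB=(cA+B)^2,
\]
hence $\mu_{\mathbf\Delta}(G(j\omega))\le cA+B$. It then remains to bound this scalar uniformly in $\omega$. Since $B\le1$ for every $\omega$, I would maximize $cA=\theta\beta\lambda_N|\omega|/\sqrt{(\omega^2+\theta^2)(\omega^2+\beta^2\lambda_N^2)}$ on its own: writing $t=\omega^2$ and differentiating $(cA)^2=\theta^2\beta^2\lambda_N^2\,t/[(t+\theta^2)(t+\beta^2\lambda_N^2)]$ gives a critical point at $t=\theta\beta\lambda_N$, where $cA=\theta\beta\lambda_N/(\theta+\beta\lambda_N)$. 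Adding the two separate suprema yields $cA+B\le \theta\beta\lambda_N/(\theta+\beta\lambda_N)+1$, so the hypothesis $\frac{\theta\beta\lambda_N}{\theta+\beta\lambda_N}+1<1/\gamma$ forces $\sup_\omega\mu_{\mathbf\Delta}(G(j\omega))<1/\gamma$, and Lemma~\ref{l3} delivers internal stability, i.e.\ robust consensus.

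The step I expect to be the main obstacle is the norm bookkeeping in the third paragraph: pushing every entry through the factorization $(s+\theta)(s+\beta\lambda_i)$, correctly identifying at which eigenvalue each block norm peaks, and recognizing the resulting perfect square, which is what makes the bound usable. The concluding optimization $\sup_\omega cA=\theta\beta\lambda_N/(\theta+\beta\lambda_N)$ is routine calculus, and bounding $cA+B$ by the sum of its two separate suprema (attained at different frequencies) is admittedly conservative but is precisely what produces the clean sufficient condition in the statement. The exclusion of Zeno behavior would follow verbatim from Theorem~\ref{th2} and need not be repeated.
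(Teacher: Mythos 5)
Your proposal is correct and takes exactly the route the paper intends: the paper's own proof is literally ``following similar lines in proving Theorem~\ref{th44}'', i.e.\ applying Lemma~\ref{l3} and the two-block bound of Lemma~\ref{l4} to the interconnection \dref{eq25}, and your frequency-response norms (obtained via the factorization $\Pi=\mathrm{diag}\{1/[(s+\theta)(s+\beta\lambda_i)]\}$) agree with the values the paper records in the proof of Theorem~\ref{th8}. The perfect-square collapse $(cA+B)^2$ and the optimization $\sup_{\omega} cA=\theta\beta\lambda_N/(\theta+\beta\lambda_N)$ are precisely the omitted details that turn that sketch into the stated sufficient condition.
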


%\begin{remark}
%	Different from the event-triggered (static) average consensus problem, in the DAC problem, due to the introducing of the exogenous reference signals, the sampling errors here are not even images of linear operators acting on the consensus error of the sampled states, letting alone images of linear transfer matrices in $\mathcal{RH}_{\infty}$.   Nevertheless, the triggering function ensures that these operators are finite-gain $\mathcal L_2$ stable and thus can be handled together with the additive dynamic uncertainties using our operator theoretic approach.
%\end{remark}
\begin{remark}
	This theorem unveils the quantitative relationship among the robustness of the event-triggered DAC algorithm against  aperiodic event triggering and additive dynamic uncertainties, the parameters $\theta$ and $\beta$ in the DAC algorithm and the largest eigenvalue of the Laplacian matrix.  
\end{remark}
%\begin{remark}
%	Note that robust consensus without the reference signal is a basic requirement for the DAC algorithm. It means that when there does not exist  an exogenous  reference signal, then the network system should achieve static average consensus. If it is broken, then the DAC algorithm will be corrupted and cannot be used anymore. 
%\end{remark}

Next, we move on to consider the DAC tracking performance, quantified  by  $\|T_{re}\|_{\infty}$. Given the reference signal $r$, the smaller the $\|T_{re}\|_{\infty}$ is, the smaller the tracking error would be.
From \dref{eq24}, we have  \begin{equation}\label{eq26}
	\begin{aligned}
	&\begin{bmatrix}
	\bar d\\\bar \epsilon
	\end{bmatrix}=\begin{bmatrix}
		\bar \Delta^a &0\\ 0 & \bar \Delta^e
	\end{bmatrix}\begin{bmatrix}
		\bar u\\\bar z \\
	\end{bmatrix},\\
		&\begin{bmatrix}
			\bar u\\\bar z \\\tilde e
		\end{bmatrix}=\begin{bmatrix}
		G_{11} & G_{12} & G_{13}\\
		G_{21} & G_{22} & G_{23}\\
		G_{31} & G_{32} & G_{33}
	\end{bmatrix}
		\begin{bmatrix}
			\bar d\\ \bar\epsilon\\\tilde r 
		\end{bmatrix},
	\end{aligned}
\end{equation}
where 
$$\begin{aligned}
 G_{13} &=\begin{bmatrix}
	0_{N-1} &-\beta s\theta \bar \Lambda\Pi  
\end{bmatrix},\\
 G_{23} &=\begin{bmatrix}
	0_{N-1} & s\theta \Pi  
\end{bmatrix} ,\\
 G_{31} &=\begin{bmatrix}
	0_{N-1} & -\Pi\theta\beta \bar\Lambda
\end{bmatrix}^T,\\
 G_{32} &= \begin{bmatrix}
	0_{N-1} & -\Pi (s\beta \bar\Lambda+\theta\beta \bar\Lambda ) 
\end{bmatrix},\\
 G_{33} &=\begin{bmatrix}
	M-\theta\beta\tilde\Pi  \Lambda
\end{bmatrix},\\
\tilde\Pi &=(s^2I+s(\theta I+\beta \Lambda) +\theta\beta \Lambda)^{-1}.
\end{aligned}
$$
Note that \dref{eq26} is an upper linear fractional transformation form and the closed-loop transfer function is $T_{\tilde r\tilde e}=\mathcal F_u(G,\Delta)$. Therefore, the performance index $\|T_{re}\|_{\infty}=\|T_{\tilde r\tilde e}\|_{\infty}=\|\mathcal F_u(G,\Delta)\|_{\infty}$.

\begin{theorem}\label{th8}
	Suppose that the parameter  $\alpha$, $\theta$, $\beta$, $\mu$ and $\nu$ are  chosen such that \begin{equation}\label{condition}
		(\frac{\theta}{\beta\lambda_2}+1)^2+(\frac{\theta\beta\lambda_N}{\theta+\beta\lambda_N}+1)^2+2\frac{\theta\beta\lambda_N}{\theta+\beta\lambda_N}+\frac{2}{\beta\lambda_2}<\frac{1}{\gamma^2},
	\end{equation}
	where $\gamma=\max\{\eta,\sqrt{2\alpha\lambda_N}\}$.
	Then the DAC tracking performance $\|T_{re}\|_{\infty}\leq\frac{1}{\gamma}$ is achieved by the event-triggered protocol \dref{eq21}. Moreover, the closed-loop system does not exhibit the Zeno behavior. 
\end{theorem}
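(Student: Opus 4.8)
The plan is to cast \dref{eq26} as a robust-performance problem and discharge it with the structured-singular-value machinery of Lemmas \ref{l5} and \ref{l4}, mirroring the route of Theorems \ref{th44} and \ref{th5} but now with a third (performance) channel. Observe first that \dref{eq26} is an upper linear fractional transformation $T_{\tilde r\tilde e}=\mathcal F_u(G,\mathbf\Delta)$ in which the block-diagonal operator $\mathbf\Delta=\mathrm{diag}\{\bar\Delta^a,\bar\Delta^e\}$ collects the two genuine uncertainties (with $\|\mathbf\Delta\|_\infty\le\gamma$) and acts on the first two channels, while the reference-to-tracking-error map occupies the third. Since $U$ is unitary, $\|T_{re}\|_\infty=\|T_{\tilde r\tilde e}\|_\infty$, so it suffices to bound $\|\mathcal F_u(G,\mathbf\Delta)\|_\infty$. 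By Lemma \ref{l5}, the target $\|\mathcal F_u(G,\mathbf\Delta)\|_\infty\le\frac1\gamma$ for every admissible $\mathbf\Delta$ is equivalent to $\sup_{\omega\in\mathbf R}\mu_{\mathbf\Delta_p}(G(j\omega))<\frac1\gamma$, where $\mathbf\Delta_p=\mathrm{diag}\{\bar\Delta^a,\bar\Delta^e,\mathbf\Delta_f\}$ augments the structure with a fictitious full-block performance uncertainty $\mathbf\Delta_f$.

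Next I would estimate this $\mu$ with the three-block inequality of Lemma \ref{l4}, i.e.\ $\mu_{\mathbf\Delta_p}^2(G(j\omega))$ is bounded by $\|G_{11}\|^2+\|G_{22}\|^2+\|G_{33}\|^2+2\|G_{12}\|\|G_{21}\|+2\|G_{13}\|\|G_{31}\|+2\|G_{23}\|\|G_{32}\|$. The computation is made tractable by the observation that the denominator factors: since $\bar\Lambda=\mathrm{diag}\{\lambda_2,\dots,\lambda_N\}$, the resolvent $\Pi$ (and likewise $\tilde\Pi$) is diagonal with scalar entries $\frac{1}{(s+\theta)(s+\beta\lambda_i)}$, because $s^2+s(\theta+\beta\lambda_i)+\theta\beta\lambda_i=(s+\theta)(s+\beta\lambda_i)$. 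Consequently each block $G_{kl}$ is a diagonal transfer matrix whose $\mathcal H_\infty$ norm reduces to a maximum over $i$ of an explicit scalar modulus; one finds, for instance, $\|G_{11}\|_\infty=\|G_{13}\|_\infty=\frac{\theta\beta\lambda_N}{\theta+\beta\lambda_N}$ (attained at $\omega^2=\theta\beta\lambda_N$), $\|G_{22}\|_\infty=1$, and $\|G_{31}\|_\infty=\|G_{32}\|_\infty=1$ (attained at $\omega=0$), while the performance entries $G_{33,i}=\frac{s(s+\theta+\beta\lambda_i)}{(s+\theta)(s+\beta\lambda_i)}$ carry the average mode $\lambda_1=0$ that encodes the tracking behaviour.

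The crux is the remaining frequency-domain maximization: I would show that, evaluated at the worst $\omega\in\mathbf R$, the six-term sum above is dominated by the left-hand side of \dref{condition}. The cleanest route is to bound each summand by a frequency-independent constant (as with the $\omega\to\infty$ and $\omega=0$ limits used in Theorems \ref{th44} and \ref{th5}) and to group the resulting constants into the completed squares $\left(\frac{\theta}{\beta\lambda_2}+1\right)^2$ and $\left(\frac{\theta\beta\lambda_N}{\theta+\beta\lambda_N}+1\right)^2$ together with the cross-term residuals $2\frac{\theta\beta\lambda_N}{\theta+\beta\lambda_N}$ and $\frac{2}{\beta\lambda_2}$. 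This is the step I expect to be the main obstacle, both because the performance block $G_{33}$ (through $\tilde\Pi$ and the zero eigenvalue) has a peak with no clean closed form and because the cross products $\|G_{kl}\|\|G_{lk}\|$ couple the large ($\lambda_N$) and small ($\lambda_2$) ends of the spectrum at a common frequency, so the estimates must be kept tight enough to collapse exactly onto \dref{condition} rather than a looser sufficient inequality. Once $\sup_{\omega}\mu_{\mathbf\Delta_p}(G(j\omega))<\frac1\gamma$ is secured from \dref{condition}, Lemma \ref{l5} yields $\|T_{re}\|_\infty\le\frac1\gamma$, and the exclusion of Zeno behaviour follows verbatim from the uniform positive inter-event lower bound established in the proof of Theorem \ref{th2}.
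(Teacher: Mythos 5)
Your proposal follows essentially the same route as the paper's own proof: recognizing \dref{eq26} as an upper LFT, invoking Lemma \ref{l5} with the augmented structure $\tilde{\mathbf\Delta}=\mathrm{diag}\{\mathbf\Delta_1,\mathbf\Delta_2,\mathbf\Delta_f\}$, bounding $\mu$ via the three-block inequality of Lemma \ref{l4}, computing the block norms through the factorization $s^2+s(\theta+\beta\lambda_i)+\theta\beta\lambda_i=(s+\theta)(s+\beta\lambda_i)$, and grouping the resulting constants into exactly the completed squares and cross terms of \dref{condition}, with Zeno excluded as in Theorem \ref{th2}. The step you flag as the main obstacle is handled in the paper precisely as you anticipate, by frequency-independent bounds on each $\|G_{kl}(j\omega)\|$ (e.g.\ $\|G_{13}\|\le\frac{\theta\beta\lambda_N}{\theta+\beta\lambda_N}$, $\|G_{31}\|,\|G_{32}\|,\|G_{23}\|\le 1$, $\|G_{33}\|^2\le(\frac{\theta}{\beta\lambda_2}+1)^2$) followed by the grouping you describe.
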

\begin{proof}
	In light of Lemma \ref{l5}, to achieve the robust tracking performance $\|T_{re}\|_{\infty}<\frac{1}{\gamma}$ for all $\left[\begin{smallmatrix}
		\bar\Delta^a & \\
		& \bar \Delta^b
	\end{smallmatrix}\right ] \in\left[\begin{smallmatrix}
		\mathbf \Delta_1 & \\
		& \mathbf \Delta_2
	\end{smallmatrix}\right ]  $ and $\left\|\left[\begin{smallmatrix}
		\bar\Delta^a & \\
		& \bar \Delta^b
	\end{smallmatrix}\right ]\right \|_{\infty}\leq \gamma,$ it is enough to let $$\mu_{\tilde{\mathbf \Delta}}\left(\begin{bmatrix}
		G_{11} & G_{12} & G_{13}\\
		G_{21} & G_{22} & G_{23}\\
		G_{31} & G_{32} & G_{33}
	\end{bmatrix}\right)<\frac{1}{\gamma},$$ 
	where $$\tilde{ \mathbf\Delta}=\begin{bmatrix}
		\mathbf \Delta_1 & &\\
		& \mathbf \Delta_2 &\\
		& & \mathbf \Delta_f
	\end{bmatrix}.$$
	 	It can be derived by some simple calculations that 
	{ $$\begin{aligned}
	&\|G_{11}(j\omega)\|^2 = \frac{\omega^2\theta^2\beta^2\lambda_N^2}{(\omega^2+\theta^2)(\omega^2+\beta^2\lambda_N^2)},\\
	&\|G_{12}(j\omega)\|=\sqrt{\frac{\beta^2\lambda_N^2\omega^2}{\omega^2+\beta^2\lambda_N^2}},\\
	&\|G_{21}(j\omega)\|
	 = \sqrt{\frac{\omega^2\theta^2}{(\omega^2+\theta^2)(\omega^2+\beta^2\lambda_2^2)}},\\
	&\|G_{22}(j\omega)\|^2
	 = \frac{\omega^2}{\omega^2+\beta^2\lambda_2^2},\\
		&\|G_{13}(j\omega)\|=\sqrt{\frac{\theta^2\beta^2\lambda_N^2\omega^2}{(\omega^2+\theta^2)(\omega^2+\beta^2\lambda_N^2)}}\leq \frac{\theta\beta\lambda_N}{\theta+\beta\lambda_N},\\
		& \|G_{31}(j\omega)\|=\sqrt{\frac{\theta^2\beta^2\lambda_N^2}{(\omega^2+\theta^2)(\omega^2+\beta^2\lambda_N^2)}}\leq 1,\\
		& \|G_{23}(j\omega)\|= \sqrt{\frac{\theta^2\omega^2}{(\omega^2+\theta^2)(\omega^2+\beta^2\lambda_2^2)}}\leq 1,\\
		& \|G_{32}(j\omega)\|=\sqrt{\frac{\beta^2\lambda_N^2}{\omega^2+\beta^2\lambda_N^2}}\leq 1,\\
		& \|G_{33}(j\omega)\|^2=\max_{i=2,\cdots,N}\frac{\omega^4+(\theta+\beta\lambda_i)^2\omega^2}{(\omega^2+\theta^2)(\omega^2+\beta^2\lambda_i^2)} \\
		& \quad\qquad\qquad\leq (\frac{\theta}{\beta\lambda_2}+1)^2.
	\end{aligned}$$}
	In light of Lemma \ref{l4}, 
 we can derive that $\|T_{re}\|_{\infty}<\frac{1}{\gamma}$, if \dref{condition} holds. The Zeno behavior can be similarly excluded as in Theorem \ref{th2} and  is omitted here for brevity.  
\end{proof}
\begin{remark}
{  Not surprisingly, we can easily see from \dref{condition} that a necessary condition of the robust performance specification is $\frac{\theta\beta\lambda_N}{\theta+\beta\lambda_N}+1<\frac{1}{\gamma}$. This coincides with the fact  that} {  the  robust performance specification  requires that the  robust consensus specification holds when there are no reference signals. Moreover, it can also be observed from \dref{condition} that to ensure that the event-triggered DAC algorithm has a better robustness and tracking performance under additive uncertainties,  the proportional gain $\theta$ should be chosen to be as small as possible. Moreover, the parameter $\alpha$ in the event-triggered mechanism should be  relatively small. This is because when $\alpha$ increases, the gain of the operator will be larger, which will lead to  larger sampling errors and thus a worse control performance. } 
	 
\end{remark}

\section{Simulation Results }\label{s6}
{\color{black}In this section, simulation examples will be provided to validate the effectiveness  of the theoretical results.  

Consider a multi-agent system with six agents whose communication graph is shown in Fig. \ref{graph}.}
\begin{figure}[!t]
\centerline{\includegraphics[width=0.7\columnwidth]{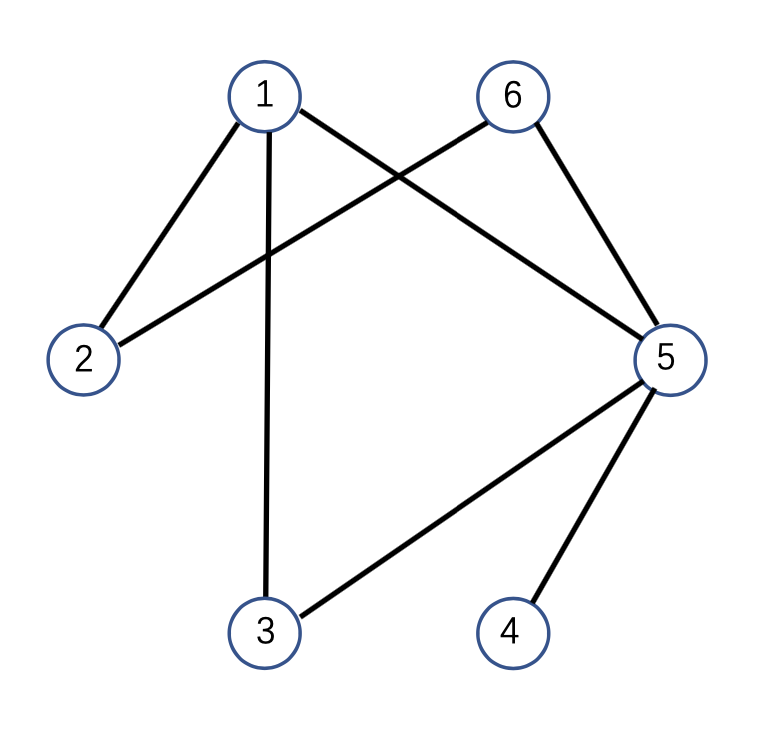}}
\caption{The network graph.}
\label{graph}
\end{figure}
{ The Laplacian matrix of the graph is $$L=\begin{bmatrix}
	3.5&-1&-2& 0 &-0.5& 0\\
    -1& 2&0 &0 &0 &-1\\
    -2& 0 & 3& 0& -1& 0\\
    0&0& 0& 1&-1& 0\\
    -0.5&0& -1&-1&3.5&-1\\
     0&-1& 0&0&-1& 2
\end{bmatrix}.$$ 
We randomly generate  transfer matrices representing the additive dynamic uncertainties from $\Delta^a_1$ to $\Delta^a_6$ as follows: 
$$\Delta^a_1=\left[\begin{smallmatrix} 
\begin{array}{cc|c}
	-55.4& 140.7 & -8.24\\  -155.7 &-71.41 &- 1.28\\ \hline 
	3.3989 & -5.4689 & 0.1022
\end{array}
\end{smallmatrix}\right],$$ 
$$\Delta^a_2=\left[\begin{smallmatrix} 
\begin{array}{cc|c}
	-55.4& 140.7 & -5.1520\\  -163.7 &-39.56 & -9.2230\\ \hline 
	2.03 & -2.14 & 0.2760
\end{array}
\end{smallmatrix}\right],$$ 
$$\Delta^a_3=\left[\begin{smallmatrix} 
\begin{array}{c|c}
	-0.3 & -0.28\\   \hline 
	0.5454  & 0.0460
\end{array}
\end{smallmatrix}\right],$$ 
$$\Delta^a_4=\left[\begin{smallmatrix} 
\begin{array}{cc|c}
	-5.4& 14.7 & -1.24\\  -15.7 &-1.41 & -0.28\\ \hline 
	0.1150 & -2.4610 & 0.0920
\end{array}
\end{smallmatrix}\right],$$ 
$$\Delta^a_5=\left[\begin{smallmatrix} 
\begin{array}{cc|c}
	-55.4& 140.7 & -3.0150\\  -155.7 &-71.41 & -3.1122 \\ \hline 
	0.33 & -2.14 & 0.4133
\end{array}
\end{smallmatrix}\right],$$ 
$$\Delta^a_6=\left[\begin{smallmatrix} 
\begin{array}{ccc|c}
	-44.4& 140.7 & -57.4 & -0.24\\  -19.7 &-18.41 & -6.32 & -1.28\\  45.70 & 29 & -130.84 & 1.16 \\ \hline 
	4.4563 & -10.2542 & 4.2646	 & 0.2875
\end{array}
\end{smallmatrix}\right].$$

Note that $\Delta_i^a\in \mathcal{RH}_{\infty}$ $\forall i=1,\cdots,6$ and $\|\Delta_i^a\|_{\infty}\leq0.4654=\eta$.  We then choose $\alpha=0.02$, $\mu=0.1$ and $\nu=5$ in the triggering function \dref{triggerfunction2}. It is easy to calculate that $\gamma=\max\{\eta,\sqrt{2\alpha\lambda_6}\}=0.4680$.  We select the controller gain $\beta$ to be $0.2$ such that the robust consensus condition in Theorem \ref{th44} is satisfied. Theorem \ref{th44} states that the network reaches consensus asymptotically under the event-triggered algorithm \dref{eq2}. To illustrate this, we depict the evolution of the state $x$   in Fig. \ref{consensus}. {  Denote the consensus error $\zeta=Mx$. The time instant after which $\|\zeta\|<0.1$ is denoted as $t_{min}$. When $\beta=0.2$, $t_{min}=14.7304$.}
\begin{figure}[!t]
\centerline{\includegraphics[width=\columnwidth,height=0.6\linewidth]{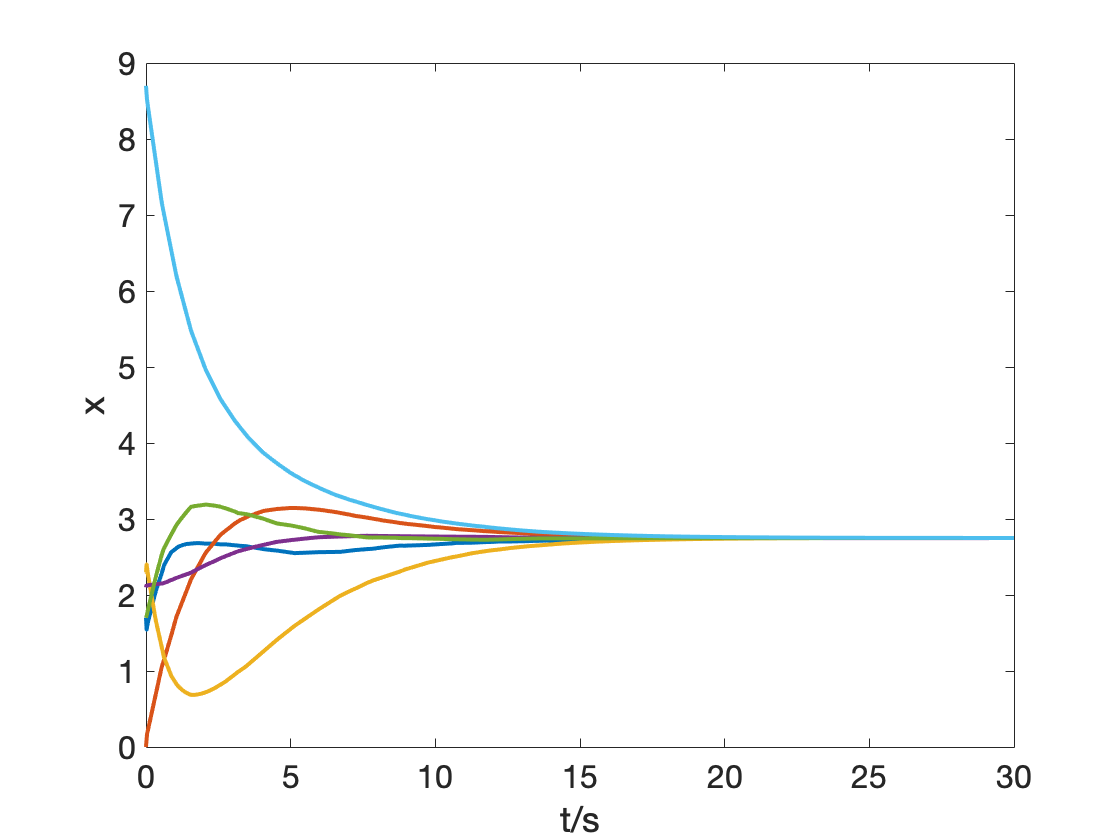}}
\caption{The evolution of the state $x$ when $\beta=0.2$ under additive dynamic uncertainties.}
\label{consensus}
\end{figure}
We then decrease the parameter $\beta$ to $0.1$ and let the other parameters remain unchanged. The evolution of the state $x$ with respect to the time $t$ is depicted in Fig. \ref{slowconsensus}. {  When $\beta=0.1$, $t_{min}=26.6796$, which is evidently larger than that when $\beta=0.2$.   Clearly, the  convergence  speed slows down with the decrease of $\beta$.}
\begin{figure}[!t]
\centerline{\includegraphics[width=\columnwidth,height=0.6\linewidth]{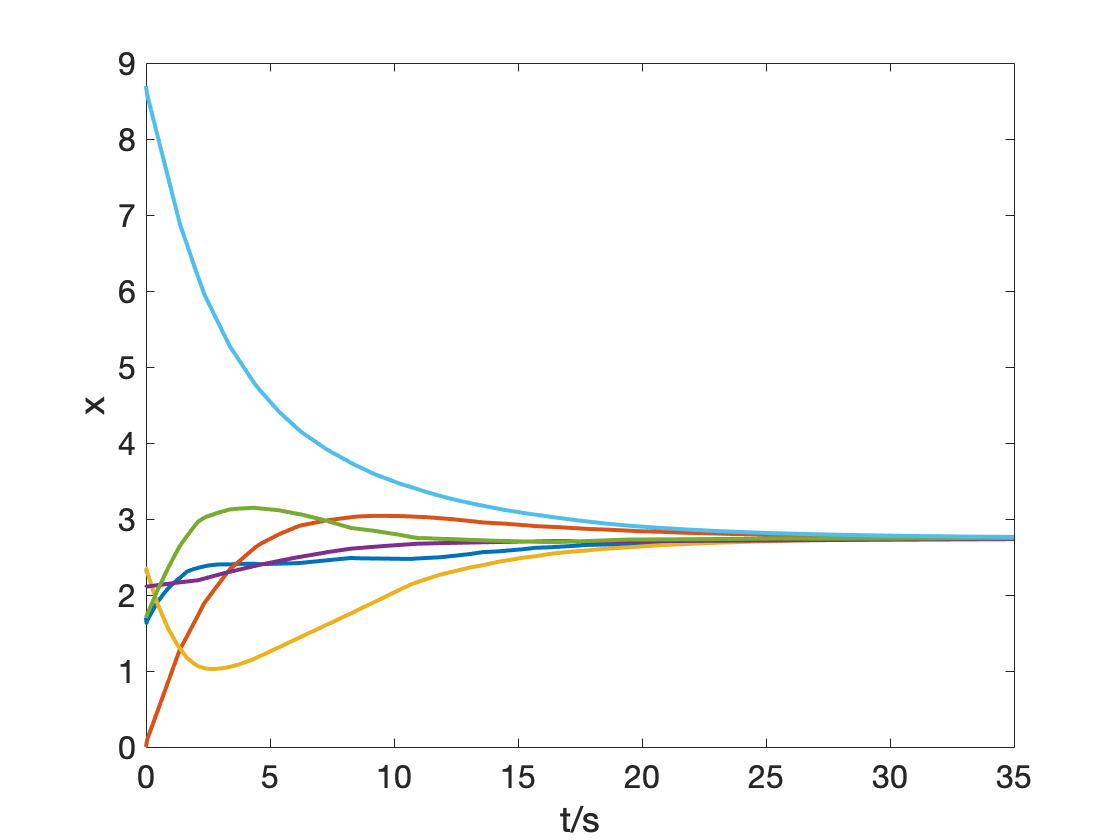}}
\caption{The evolution of the state $x$ when $\beta=0.1$ under additive dynamic uncertainties.}
\label{slowconsensus}
\end{figure}
On the other hand, if we increase the parameter $\beta$ for too much, then in virtue of Theorem \ref{th44}, the robust consensus cannot be guaranteed. For example, if we increase $\beta$ to $1.2$. Then the state $x$ cannot reach consensus anymore.
To show that the closed-loop system does not exhibit the Zeno behavior, we draw the triggering instants of the six agents when $\beta=0.2$ in Fig. \ref{triggertime}.
\begin{figure}[!t]
\centerline{\includegraphics[width=\columnwidth,height=0.6\columnwidth]{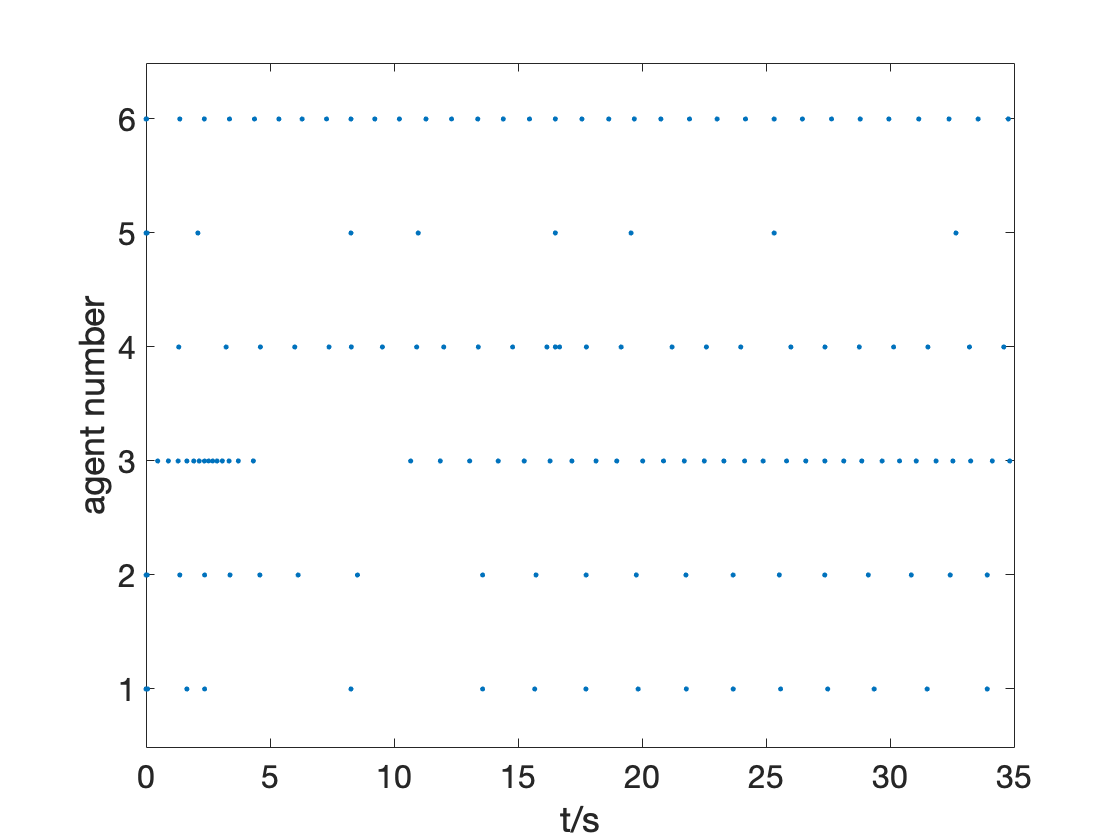}}
\caption{Triggering  instants of the six agents when $\beta=0.1$ under additive dynamic uncertainties.}
\label{triggertime}
\end{figure}

{ We next move on to test the robustness against network topology uncertainties and verify the result in Theorem \ref{th5}. We consider  the  communication topology  as in Fig.\ref{graph}. For illustration, we randomly generate a dynamical perturbation for each edge within a $H_{\infty}$ norm bound $\delta=0.1315$, namely, $\Delta_i^c$,$i=1,\cdots,7$. To satisfy the condition provided in Theorem \ref{th5}, we choose $\beta=0.08$, $\alpha=0.002$, $\mu=0.1$ and $\nu=5$. It is easy to verify that $\gamma=\max{\{\delta, \sqrt{2\alpha\lambda_6}\}}=0.1480$ and $\gamma<\sqrt{\frac{\lambda_2^2}{\lambda_2^2+\lambda_6^2}}.$ The evolution of the states of the six agents is depicted in Fig.\ref{consensus_network_uncertainty}. It is shown that the six agents reach state consensus. Furthermore, the triggering instants of the six agents are drawn in Fig.\ref{triggertime_network_uncertainty}.}

\begin{figure}[!t]
\centerline{\includegraphics[width=\columnwidth,height=0.6\linewidth]{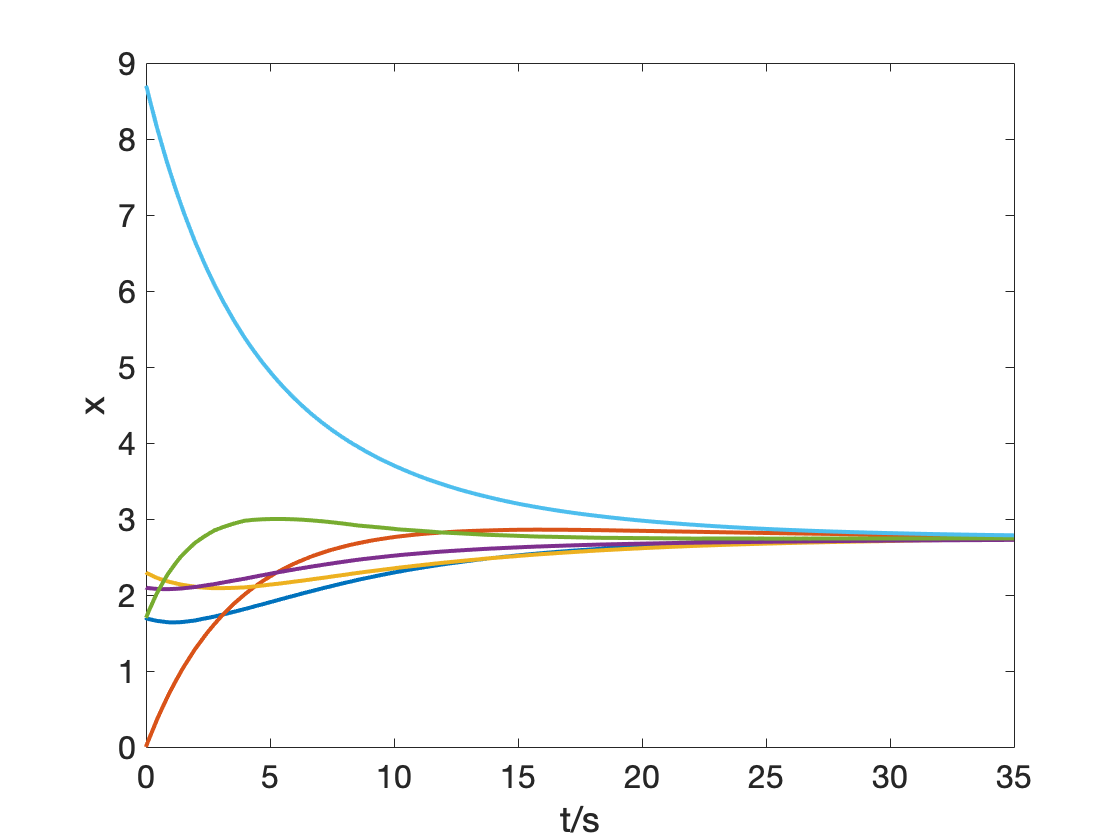}}
\caption{The evolution of the state $x$ under network topology uncertainties.}
\label{consensus_network_uncertainty}
\end{figure}

\begin{figure}[!t]
\centerline{\includegraphics[width=\columnwidth,height=0.6\columnwidth]{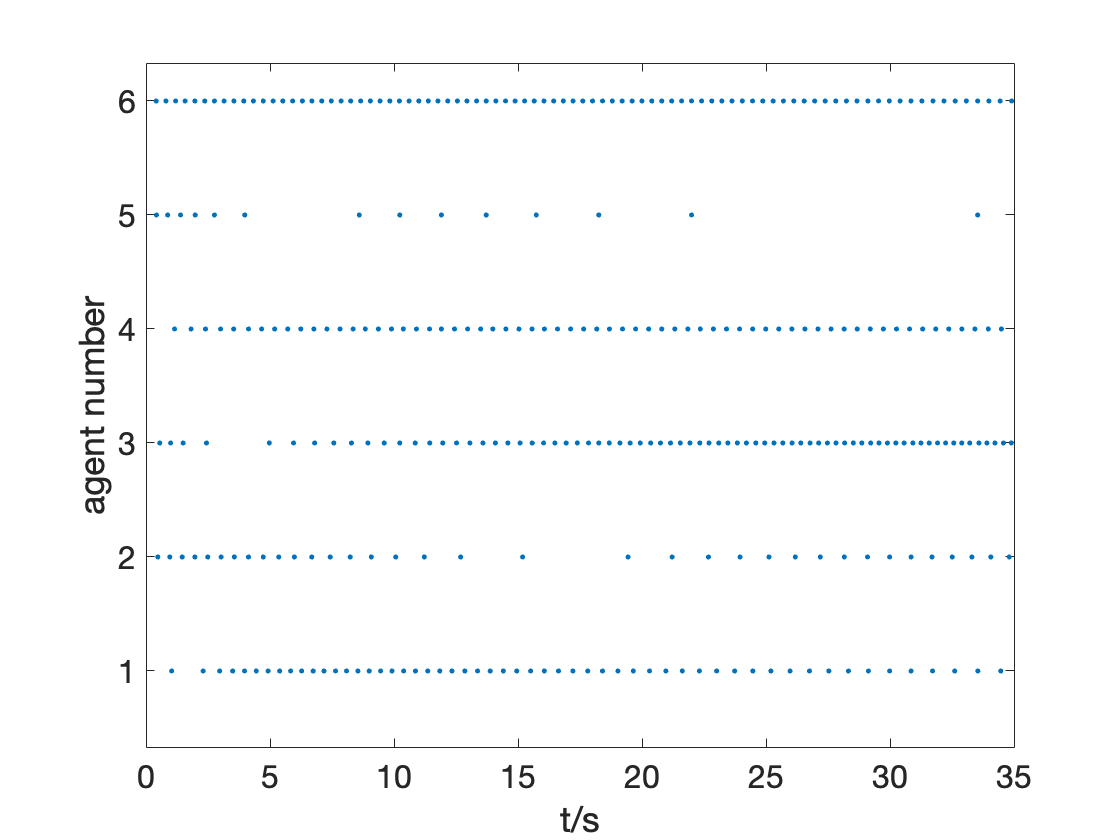}}
\caption{Triggering instants of the six agents under network topology uncertainties.}
\label{triggertime_network_uncertainty}
\end{figure}

\color{black}To test the robust performance of the event-triggered DAC algorithm \dref{eq21} in the presence of additive dynamic uncertainties, we choose $\alpha=0.02$, $\mu=0.1$, and $\nu=5$ in the triggering function \dref{eq22}. The perturbations $\Delta_i^a$ are the same as in the first case (the case when agents are subject to additive dynamic uncertainties). 
%It then follows that 
%$\Delta_i^a\in \mathcal{RH}_{\infty}$ $\forall i=1,\cdots, 6$ and $\|\Delta_i^a\|_{\infty}\leq 0.4752=\eta$.
%Therefore $\gamma=\max\{\eta,\sqrt{\alpha\lambda_6}\}=0.4752$.
To satisfy the robust performance specification in Theorem \ref{th8}, we can choose $\theta=0.25$ and $\beta=1.2$ in the DAC algorithm \dref{eq21}. Supposing that the reference signals whose average  the agents need to track are $r=[6.1\sin(0.02t);19.1\cos(0.02t);4.8\cos(0.07t+8);2.2\sin(0.06t);1.9\cos(0.041t)e^{-0.09t};2.5\sin(0.05t)]^T$, we can calculate the evolution of the average tracking error $e$. By depicting the evolution of $e$, Fig. \ref{trackingerror} illustrates that the event-triggered DAC  algorithm has a pre-specified tracking performance. 
\begin{figure}[!t]
\centerline{\includegraphics[width=\columnwidth,height=0.6\columnwidth]{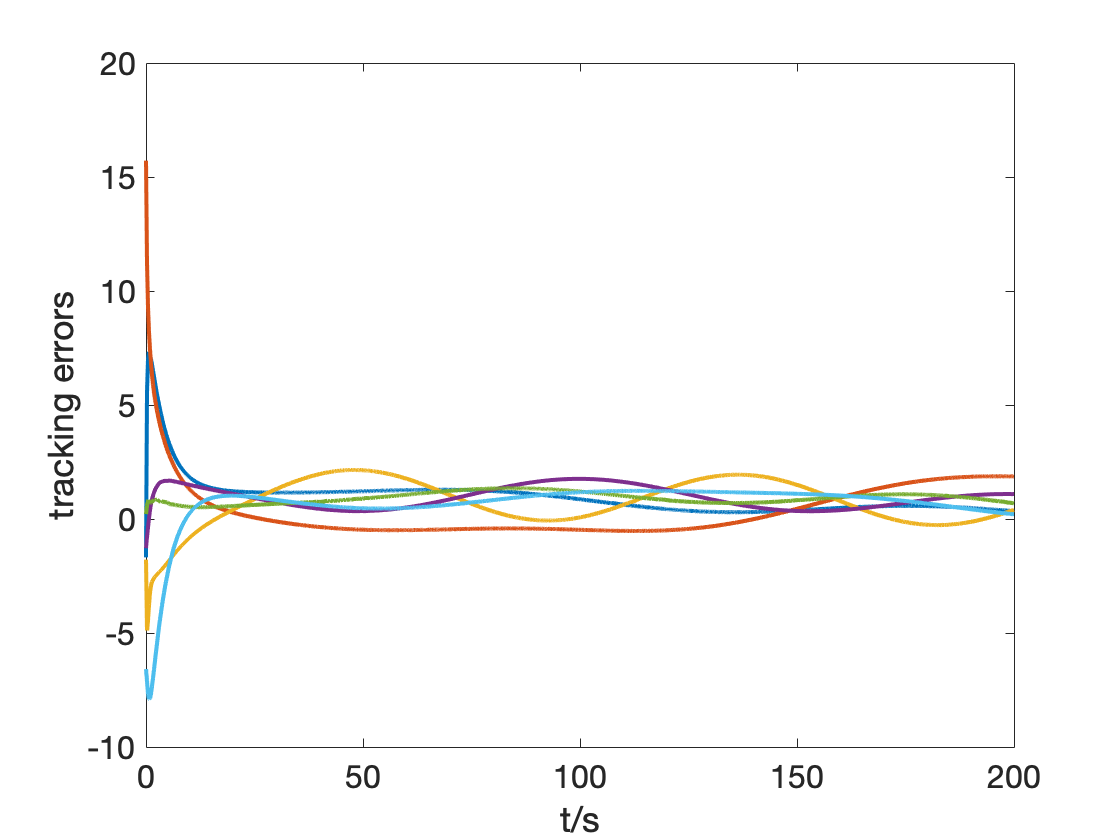}}
\caption{The evolution of  the DAC tracking errors $e$.}
\label{trackingerror}
\end{figure}
The triggering instants of the six agents from $194s$ to $200s$ are depicted in  Fig. \ref{triggertime2}.

\begin{figure}[!t]
\centerline{\includegraphics[width=\columnwidth,height=0.6\columnwidth]{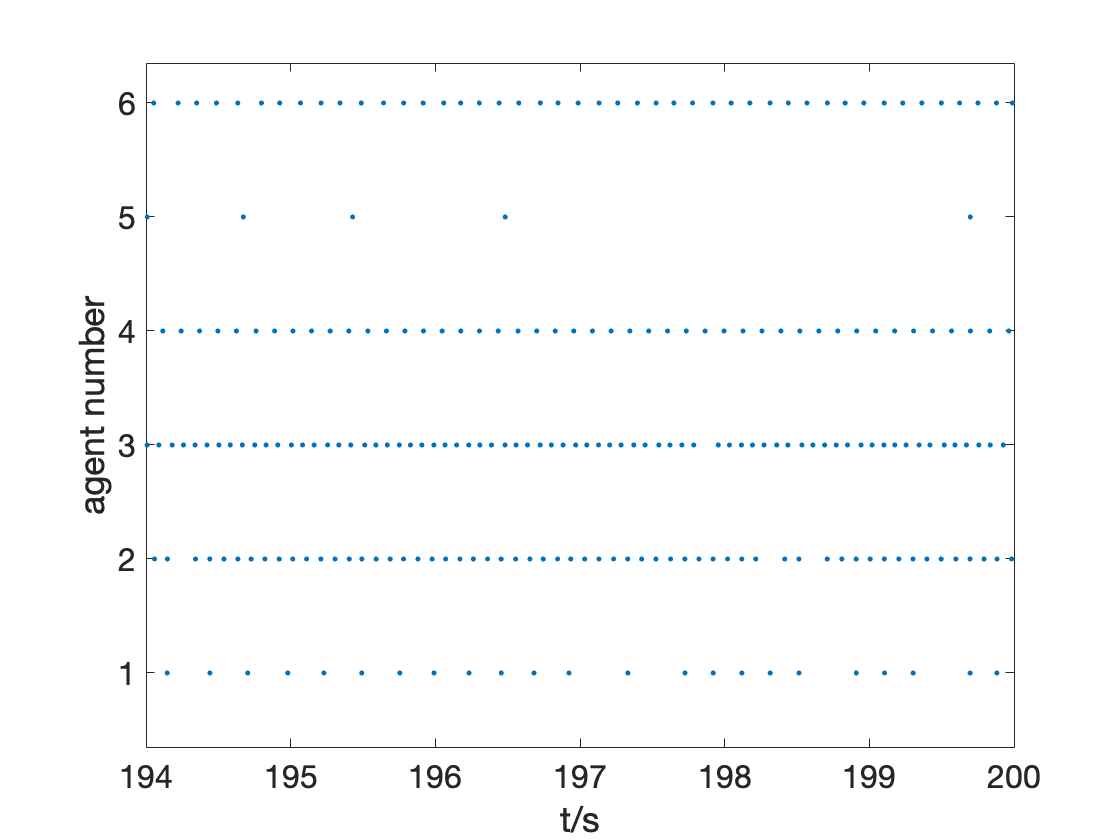}}
\caption{Triggering instants of the agents from $194s$ to $200s$.}
\label{triggertime2}
\end{figure}

\section{Conclusion}\label{conclusion}
In this paper, a novel operator-theoretic approach has been put forward to study the robustness of the event-triggered consensus algorithms against frequency-domain uncertainties. By treating the event-triggered sampling mechanism as a negative feedback loop and the sampling errors resulted by event triggering as  the images of the finite-gain $\mathcal L_2$ stable operators, a frequency-domain analysis framework has been established. 
The developed approach  effectively  builds a bridge between the time-domain triggering mechanism and the frequency-domain uncertainties.

There are many potential extensions to this paper. Perhaps the most direct one is to consider the case where the agent dynamics are   higher-order integrators or even  general linear systems. It should be noted that the introduction of the system matrix $A$ will definitely make the problem much more challenging. This is  for sure an important problem we will consider in the future. }

\end{document}